\let\csname equation*\endcsname\relax
\let\csname endequation*\endcsname\relax
\providecommand{\binom}[2]{\left(\begin{array}{c}{#1}\\{#2}\end{array}\right)}
\newcommand{\CC}{\ensuremath{\mathcal{C}} }
\newcommand{\MM}{\ensuremath{\mathcal{M}} }
\newcommand{\LL}{\ensuremath{\mathcal{L}} }
\newcommand{\None}{\ensuremath{\mathbb{N}}}
\newcommand{\Ztwo}{\ensuremath{\mathbb{Z}^2}}
\newcommand{\Ntwo}{\ensuremath{\mathbb{N}^2}}
\newcommand{\JJ}{\ensuremath{\mathcal{J}} }
\theoremstyle{plain}
\newtheorem{theorem}{Theorem}[section]
\newtheorem{prop}[theorem]{Proposition}
\newtheorem{lem}[theorem]{Lemma}
\theoremstyle{definition}
\newtheorem{defn}[theorem]{Definition}
\newtheorem{rem}[theorem]{Remark}
\definecolor{cb-blue}{RGB}{0,109,219}
\newcommand{\corr}[1]{\textcolor{black}{#1}}
\begin{document}

\title[Transfer operator approach to ray-tracing
in circular domains]{Transfer operator approach
to ray-tracing in circular domains}
\author{J Slipantschuk$^1$, M Richter$^{2,3}$,
D J Chappell$^3$, G Tanner$^2$, W Just$^1$
and O F Bandtlow$^1$}
\address{$^1$ School of Mathematical Sciences, Queen Mary University
of London, London E1 4NS, UK}
\address{$^2$ School of Mathematical Sciences, University of Nottingham,
University Park, Nottingham NG7 2RD, UK}
\address{$^3$ School of Science and Technology, Nottingham Trent University,
Nottingham NG11 8NS, UK}
\eads{
\mailto{j.slipantschuk@qmul.ac.uk},
\mailto{martin.richter@nottingham.ac.uk},
\mailto{david.chappell@ntu.ac.uk},
\mailto{gregor.tanner@nottingham.ac.uk},
\mailto{w.just@qmul.ac.uk},
\mailto{o.bandtlow@qmul.ac.uk}
}

\date{3 October 2019}

\begin{abstract}
The computation of wave-energy distributions in the mid-to-high frequency regime
can be reduced to ray-tracing calculations.
Solving the ray-tracing problem in terms of an operator
equation for the energy density leads to an inhomogeneous equation which involves a Perron-Frobenius operator
defined on a suitable Sobolev space. Even for fairly simple geometries, let alone realistic scenarios such as typical boundary
value problems in room acoustics or for mechanical vibrations, numerical approximations are necessary. Here we study the convergence of approximation schemes
by rigorous methods. For circular billiards we prove that convergence of
finite-rank approximations using a Fourier basis follows a power law where the
power depends on the smoothness of the source distribution driving the system.
The relevance of our studies for more general geometries is illustrated by
numerical examples.
\end{abstract}

% \keywords{Dynamical energy analysis, circular billiards,
%  Perron-Frobenius operators}

\ams{37M25, % Computational methods for ergodic theory
            % (approximation of invariant measures,
            % computation of Lyapunov exponents, entropy)
     37C30,  % Zeta functions, (Ruelle-Frobenius) transfer operators,
            % and other functional analytic techniques in dynamical
            % systems
     74H45,   % Vibrations
     37D50 % Hyperbolic systems with singularities (billiards, etc.)
}

% \submitto{\NL}
\maketitle

\section{Introduction}

Ray-tracing methods serve as an important toolkit in finding approximate
solutions of linear wave equations in the high frequency limit.
This approximation is used in a variety of fields providing, for example, the connection between Maxwell's
equations and geometric optics, as well as between quantum
mechanics and classical Hamiltonian mechanics~\cite{Haa2001}.
The ray-tracing limit has also been considered in detail in acoustics, seismology and mechanical vibrations~\cite{TS07}.
In engineering applications, ray tracing is employed in handling electromagnetic problems, such as coverage estimates for 5G or WiFi communication~\cite{Des1972}, room acoustics simulations~\cite{SS15} as well as structure-borne sound propagation in mechanical structures~\cite{ChaIh2001}.
Finding closed form, analytical solutions to such engineering problems of sufficient complexity
is generally impossible, even using ray-tracing techniques,
%due to the complicated nature of the problem
and one has to use numerical methods instead.

For solving linear wave problems such as those listed above, the numerical methods used have to be adapted to the relevant length and frequency
scales involved.
In the low frequency regime, finite element methods (FEM) are routinely employed for resolving the full wave dynamics.
However, the number of degrees of freedom in an FEM model needs to scale with the wavelength and there is thus an upper limit in
frequency above which the required computational resources become unfeasible.
At very high frequencies, power balance approaches can often be used
as long as certain assumptions on the ergodicity of the underlying ray dynamics are
 satisfied~\cite{Tann_JSV09}.
In the mid-to-high frequency range, ray-tracing becomes the method of choice; standard ray-tracing techniques track all possible rays from a source to a receiver point~\cite{SS15} --- a method which becomes cumbersome if many reflections need to be taken into account. As an alternative {\em Dynamical Energy
Analysis} (DEA) was proposed and has proven to be useful in particular for structure-borne sound problems~\cite{Tann_JSV09, HarMorTanCha2019_tractor_yanmar}.
Instead of tracking individual rays carrying vibrations across the
complex structure --- which is extremely challenging --- in DEA, the problem is reformulated in terms of densities of
rays, which are then mapped across a mesh representing the structure~\cite{ChLoSoTa_WM14, ChTaLoSo_PRSA13}.
This reduces the ray tracing problem from tracking rays on complicated and
curved domains to mapping ray segments across small, plane patches of a simple shape forming the mesh, typically
triangular or quadrilateral mesh cells.
The ray densities are then mapped from one cell of the mesh to
adjacent ones and the overall transport problem can be formulated in terms
of an inhomogeneous
% Perron-Frobenius type
equation of the form
\begin{equation}\label{eq:op_eq}
(I-\LL)f=f_0\,,
\end{equation}
where
$f_0$ is the initial ray density, $\LL$ a Perron-Frobenius type operator describing the
evolution of ray densities and $f$ the required final ray density.
% \eref{eq:op_eq}.
%
% The corresponding Perron-Frobenius operator contains a part governing
% the conservative ray dynamics as well as a weight which accounts for
% the dissipation.
%
Using DEA, the distribution of vibrational energy in
mechanical structures, such as ships, cars and tractors~\cite{HaTaXiChBa_JP16, HarMorTanCha2019_tractor_yanmar} can be calculated
successfully.

For such realistic geometries, equation (\ref{eq:op_eq}) above cannot be solved analytically, so recourse is made
to numerical schemes based on heuristic finite-dimensional matrix approximations of the operator $\LL$.
% of the relevant quantities, i.e. the ray densities are expanded in a set
% of basis functions which is then truncated at a sufficient order.
%
To date, very little is known about the convergence properties of these schemes and the dependence of the convergence rate on the
ray dynamics, as well as the discretisation techniques \cite{ChTaLoSo_PRSA13}.
The precise form of convergence is likely to be highly sensitive to both the basis functions used
in approximating the inhomogeneous equation (\ref{eq:op_eq}), as well as dynamical and damping properties of the system under
investigation~\cite{HaTaXiChBa_JP16}.
%
%In the present paper we will not address modelling dissipation
%at all, as it can be a challenging subject in its own right
%
For our study, we will therefore be concerned with the approximation of
$\LL$ by operators of finite rank.
%%
% Some related studies on the numerical approximation of the Perron-Frobenius operator for Hamiltonian dynamics have been reviewed in \cite{Klus16}, with a view towards applications in molecular dynamics. The methods surveyed fall into two main camps; generalised Galerkin projections and eigenfunction approximations based on an  extension of dynamic mode decomposition. Also of note is recent work that instead uses the finite volume method for the numerical approximation scheme  \cite{RN18}.
\corr{There is a plethora of papers on numerical approximation of Perron-Frobenius operators, starting with Ulam's method of phase space discretisation, finite section or Galerkin methods,
and data-driven methods, see for example
\cite{BaladiHolschneider1999, DelJunge99, DelFroSer00, Klus16,
  Liverani2001}
to mention but a few.}
Surprisingly,
the application of DEA (which falls into the Galerkin category)  to even fairly simple geometries has not been dealt with
at a rigorous level. Here, we shall thus focus
on one of the simplest cases, the
billiard dynamics given by the ballistic motion within
a circular disk. We shall establish rigorous error bounds of finite-dimensional
approximations for the resulting energy distribution.

In order to set up the required notation, consider
a particle moving inside a circular billiard table $\cal{D}$
being specularly reflected at its boundary $\partial \cal{D}$.
We parametrise $\partial \cal{D}$ by the
polar angle
$x\in\mathbb{R}/2\pi\mathbb{Z}$ and we denote by
$y\in [-\pi/2, \pi/2]$ the angle of
reflection that the postcollisional velocity vector
has with the inward normal to $\partial \cal{D}$.
Initially the collision angle is defined on an interval.
It is, however, technically simpler to deal with cyclic variables.
Since both angles $-\pi/2$ and $\pi/2$ correspond to a particle
which sticks on the boundary we identify both angles
so that the collision angle becomes a cyclic variable as well.
With these conventions, the collision map $T$ on the domain
$\Omega = (\mathbb{R}/2\pi\mathbb{Z}) \times (\mathbb{R}/\pi\mathbb{Z})$ can be written as
\begin{equation}\label{eq:T}
T(x,y) = (x + \pi -2y, y), \quad \, (x, y)\in \Omega
\end{equation}
with its inverse $\phi=T^{-1}$ given by
\begin{equation}\label{eq:phi}
\phi(x,y) = (x - \pi + 2y, y) \quad \, (x, y)\in \Omega\,.
\end{equation}
It is not difficult to see that the collision map $T$ preserves the normalised Haar measure on
$\Omega$. The long-term statistical behaviour of $T$ can thus be studied by
investigating the associated Perron-Frobenius operator (see, for example, \cite{BoyarskyGora}),
which for invertible measure-preserving
maps is given by the composition operator $\CC_\phi$ defined as
\begin{equation}\label{eq:Cphi}
(\CC_\phi f)(x, y) = f(\phi(x, y)), \quad (x,y) \in \Omega\,,
\end{equation}
where $f\colon \Omega \to \mathbb{C}$.
In the current work we are interested in the properties of a weighted Perron-Frobenius operator,
also known as a transfer operator.
In order to define it, let us first introduce
a multiplication operator $\MM_w$ acting on functions $f\colon \Omega \to \mathbb{C}$ by
\begin{equation}\label{eq:Mw}
(\MM_w f)(x, y) = w(x,y) f(x,y), \quad (x,y) \in \Omega,
\end{equation}
where $w\colon \Omega\to [0,\infty)$ is a suitable weight function, which in
the DEA framework accounts for
dissipation caused either by collisions with the wall or by in-flight
dissipation.
The transfer operator, understood to be acting on a suitable space of functions detailed in the
following section, is now given by
\begin{equation}\label{eq:op_eq_b}
\LL = \MM_w \CC_\phi\,.
\end{equation}
In the present article, we are interested in approximations of the solution to the
operator equation (\ref{eq:op_eq})
%\begin{equation}
%(I - \LL) f = f_0 \, .
%\end{equation}
with $f_0\colon \Omega \to [0,\infty)$  interpreted as the initial boundary density of particles induced
by the first boundary collision of particles emitted by a source located in the interior
of $\cal{D}$ (see \cite{Tann_JSV09}).
In the DEA approach this quantity represents the energy source.
The resulting energy distribution is captured by the solution,
$f\colon \Omega \to [0,\infty)$, which gives the stationary boundary density
generated by the collision dynamics.
%The weight function $w$ takes the dissipative effects into account.
Given a suitable Banach space and a sequence of finite-rank projections
$(P_K)_{K\in\None}$, an approximation method for \eqref{eq:op_eq} can be
constructed by considering the projected finite-dimensional problem
% \begin{equation}\label{eq:proj_op_eq}
% P_K(I-\LL)f_K = P_K f_0 \, .
% \end{equation}
\begin{equation}\label{eq:proj_op_eq}
\corr{(I-P_K\LL P_K)f_K = f_0 \, .}
\end{equation}

The aim of this work is to present a Banach space for $f_0$ and
$(P_K)_{K\in\None}$, so that problem \eref{eq:proj_op_eq}
has solutions, which converge in a suitable topology to the solution
of \eref{eq:op_eq} as $K$ tends to infinity, with the speed of convergence
being of the order $K^{-\alpha}$. The exponent $\alpha$
depends on the smoothness of $f_0$
and the requirements imposed on the type of convergence.

In passing we note that transfer operators have their roots in
statistical mechanics \cite{ Mayer78, Ruelle78}
and nowadays play an important role in the ergodic theory of smooth expanding, or more
generally, hyperbolic dynamical systems (see, for example, \cite{BaladiBook1, BaladiBook2}).
The main reason for their popularity in this context derives from the fact that for expanding or
hyperbolic dynamical systems the transfer operator, when considered on a
suitable function space, can be shown to have discrete peripheral spectrum, from which
long-term statistical properties of the underlying system can be derived. In the elliptic setting,
however, such as for the circular billiard considered in this article, analogous results cannot be
expected, and, as a consequence, transfer operator methods have received little attention in this
context. It is perhaps worth noting that in our setting we do not require discreteness of the
peripheral spectrum of the transfer operator. The main onus is to show that the resolvent of the
transfer operator exists at the point $1$ (see equation \eqref{eq:op_eq})
and can be effectively
approximated by finite-rank operators (see equation \eqref{eq:proj_op_eq}).

As we intend to keep our presentation accessible to non-specialists,
we will occasionally elaborate on aspects covered
in the specialised literature but which may not be well known
to a general audience. The remaining parts are organised as follows.
In Section~\ref{sec:trans} we introduce Sobolev spaces, on which the transfer
operator and its finite-dimensional approximations are bounded operators
with spectral radii bounded away from $1$.
In Section~\ref{sec:conv} we shall prove the convergence results for
the operator equations \eref{eq:op_eq} and \eqref{eq:proj_op_eq} stated as Theorem \ref{thm:conv}.
In the final Section~\ref{sec:conc}  we summarise the main findings,
% in lay terms,
compare the formal results with numerical simulations
and explore the relevance of the current study in a wider context.

%
% FIXME: Cite~\cite{BalGou2010} here for another use of anisotropic Sobolev spaces?
%

\section{Sobolev spaces and transfer operators}\label{sec:trans}

We will be interested in certain subspaces of
$L^2(\Omega) = L^2(\Omega, m)$ where
$dm = dx dy/(2\pi^2)$ is the normalised two-dimensional Lebesgue measure on $\Omega$.
The natural inner product is given by
\[(f,g)_{L^2} = \int_{\Omega} f(x,y) \overline{g(x,y)} {\rm d}m \, . \]
An orthonormal basis of $L^2(\Omega)$ is given by
$\{e_k\colon k\in \Ztwo\}$ where  $e_k(x,y)=\mathrm{e}^{\mathrm{i} k_1 x}
\mathrm{e}^{2\mathrm{i}k_2 y}$
so that $f(x,y) = \sum_{k \in \Ztwo} c_k(f) e_k(x,y)$ with Fourier coefficients
$c_k(f) = (f, e_k)_{L_2}$.

\begin{defn}
Let $m=(m_1, m_2) \in\mathbb{N}_0^2$. The Sobolev space
$H^m(\Omega)$ is the collection of all $f\in L^2(\Omega)$ such that for all
$\nu = (\nu_1, \nu_2) \in \mathbb{N}^2$ with $\nu_1 \leq m_1$  and $\nu_2 \leq m_2$ the
weak derivatives $D^\nu f = D_x^{\nu_1}D_y^{\nu_2} f$  exist and belong to $L^2(\Omega)$.
\end{defn}

The space $H^m(\Omega)$ is a Hilbert space, when equipped with the inner product\footnote{This choice of inner product is sometimes referred to as the modified inner product, in contrast with the classical one (see, for example, \cite[Def 2.2]{KSU}).}
\begin{equation}\label{eq:Hm}
(f,g)_{H^m} =
(f,g)_{L^2} +(D^{m_1}_xf, D^{m_1}_xg)_{L^2} +
(D^{m_2}_yf,D^{m_2}_yg)_{L^2} \, .
\end{equation}
One can rewrite this definition in terms of Fourier coefficients.
Using the fact that $c_k(D^\nu f) = (\mathrm{i}k_1)^{\nu_1}
(2\mathrm{i}k_2)^{\nu_2} c_k(f)$, equation \eqref{eq:Hm} can be expressed as
\begin{equation}\label{eq:Hkf}
(f,g)_{H^m} = \sum_{k\in\Ztwo}(1+|k_1|^{2m_1} + |2k_2|^{2m_2})
c_k(f) \overline{c_k(g)} \, .
\end{equation}

\begin{rem}
For $m=(m_1,m_2)$ with $m_1=m_2$ the Sobolev space $H^m(\Omega)$
coincides with the
classical isotropic Sobolev space, while for $m_1\neq m_2$, the space
is an example of an anisotropic Sobolev space (see, for example, \cite[Sec. 2.2]{ChenWang}).
\end{rem}

Using equation \eqref{eq:Hkf} we can define fractional Sobolev spaces
$H^s(\Omega)$  for  $s=(s_1, s_2) \in \mathbb{R}^2_+$ as
%%\begin{equation}%%\label{eq:Hs}
\[
H^{s}(\Omega) = \left\{f \in L^2(\Omega)\colon
\sum_{k\in \Ztwo} |c_{k}(f)|^2 (1+|k_1|^{2s_1} + |2k_2|^{2s_2})< \infty \right\},
\]
%%\end{equation}
which are Hilbert spaces when equipped with the inner product given in
equation \eqref{eq:Hkf} with
$m$ replaced by $s$.

We shall next investigate the properties of the composition operator
$\CC_\phi$ associated with the map $\phi$ in \eqref{eq:phi} on the fractional
Sobolev space $H^s(\Omega)$.

\begin{lem} \label{lem:C}
The composition operator $\CC_{\phi}$ given in \eqref{eq:Cphi} considered
on $H^s(\Omega)$  with $s_1 \geq s_2 \geq 0$ is bounded, with spectral radius $r(\CC_{\phi})=1$.
\end{lem}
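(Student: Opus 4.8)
The plan is to work directly with Fourier coefficients, since the map $\phi(x,y) = (x - \pi + 2y, y)$ is affine and the exponentials $e_k$ diagonalise its action rather cleanly. First I would compute $\CC_\phi e_k$ explicitly: substituting into \eqref{eq:Cphi}, one finds
\begin{equation*}
(\CC_\phi e_k)(x,y) = \mathrm{e}^{\mathrm{i}k_1(x-\pi+2y)}\mathrm{e}^{2\mathrm{i}k_2 y} = \mathrm{e}^{-\mathrm{i}\pi k_1}\,\mathrm{e}^{\mathrm{i}k_1 x}\,\mathrm{e}^{2\mathrm{i}(k_1+k_2)y} = (-1)^{k_1} e_{(k_1,\,k_1+k_2)}\,.
\end{equation*}
Thus $\CC_\phi$ acts as a unimodularly-weighted \emph{shear} on the index lattice: $k=(k_1,k_2)\mapsto (k_1, k_1+k_2)$, preserving the first coordinate and shifting the second. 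This is the structural observation that drives everything.

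Boundedness then reduces to comparing the weights $\varpi(k) := 1+|k_1|^{2s_1}+|2k_2|^{2s_2}$ before and after the shear. Since $\|\CC_\phi f\|_{H^s}^2 = \sum_{k}|c_k(f)|^2\,\varpi(k_1,\,k_1+k_2)$, it suffices to bound $\varpi(k_1,k_1+k_2)/\varpi(k_1,k_2)$ uniformly in $k$. The first-coordinate contribution $|k_1|^{2s_1}$ is untouched by the shear, so the only issue is the second-coordinate term: I would bound $|2(k_1+k_2)|^{2s_2}$ against $\varpi(k_1,k_2)$ using the elementary inequality $|k_1+k_2|^{2s_2} \le C_{s_2}(|k_1|^{2s_2}+|k_2|^{2s_2})$. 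Here the hypothesis $s_1 \ge s_2$ is exactly what is needed: the $|k_1|^{2s_2}$ piece that the shear generates in the second slot is controlled by the untouched $|k_1|^{2s_1}$ piece already present in $\varpi(k_1,k_2)$ (for $|k_1|\ge 1$ one has $|k_1|^{2s_2}\le|k_1|^{2s_1}$). This yields a uniform constant and hence boundedness. I expect this weight-comparison — keeping careful track of the low-frequency terms and where the $s_1\ge s_2$ condition enters — to be the main technical point.

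For the spectral radius, I would argue $r(\CC_\phi)=1$ by bounding it between $1$ and $1$. The lower bound $r(\CC_\phi)\ge 1$ is immediate because the constant function $e_{(0,0)}$ is fixed by $\CC_\phi$ (up to the sign $(-1)^0=1$), so $1$ is an eigenvalue. For the upper bound I would estimate the operator norm of the iterates $\CC_\phi^n$ and invoke the spectral radius formula $r(\CC_\phi)=\lim_{n\to\infty}\|\CC_\phi^n\|^{1/n}$. Since $\phi^n(x,y)=(x-n\pi+2ny,\,y)$, the $n$-th iterate shears the index by $k\mapsto(k_1,\,nk_1+k_2)$, and the same weight comparison gives $\|\CC_\phi^n\| \le C\,n^{s_2}$ (polynomial growth in $n$ coming from the factor $|nk_1+k_2|^{2s_2}$). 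Taking $n$-th roots, $\lim_n (C n^{s_2})^{1/n}=1$, which forces $r(\CC_\phi)\le 1$. Combining the two bounds gives $r(\CC_\phi)=1$.

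In carrying this out I would be careful that the $n$-dependence of $\|\CC_\phi^n\|$ is genuinely only polynomial and not exponential — this is the crux of why the spectral radius is exactly $1$ rather than larger, and it is a direct consequence of the shear being unipotent on the lattice (no eigenvalue of modulus $>1$ in the linear part). The polynomial-versus-exponential distinction is where the elliptic, non-hyperbolic nature of the circular billiard manifests itself, and it is worth stating explicitly that this lack of spectral gap is precisely what one expects here, consistent with the remark in the introduction that discreteness of the peripheral spectrum is neither available nor needed.
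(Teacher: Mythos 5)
Your proposal is correct and follows essentially the same route as the paper's proof: the explicit shear action of $\CC_\phi^n$ on the Fourier lattice, the weight comparison via the elementary inequality $(x+y)^t \le C_t(x^t+y^t)$ with $s_1 \ge s_2$ controlling the generated $|k_1|^{2s_2}$ term, polynomial growth $\|\CC_\phi^n\| = O(n^{s_2})$ feeding into the spectral radius formula, and the fixed constant function $e_0$ for the lower bound. The only cosmetic difference is that you phrase the lower bound via the eigenvalue $1$ while the paper bounds $\|\CC_\phi^n\|$ from below by $1$; these are the same observation.
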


\begin{proof}
For any $n\in \mathbb{N}$ and $(x,y)\in\Omega$
we have $\phi^n(x,y) = (x - n\pi + 2ny, y)$, and thus
\begin{equation}
  \label{eq:Cphi_n}
% \[
  (\CC^n_{\phi} e_k)(x,y) =
  (\CC_{\phi^{n}}e_k)(x,y) = (-1)^{k_1n} e_{k_1, nk_1 + k_2}(x,y),
% \]
\end{equation}
for any $k\in\Ztwo$.

In order to show that the operator is bounded
we will need the following general inequality.
Let $(x,y)\in[0,\infty)^2$ and $t\geq 0$, then
\begin{equation}\label{eq:holder}
(x + y)^t \leq C_t (x^t + y^t), \quad \mbox{ with } C_t = \max(1, 2^{t-1}).
\end{equation}

 Using equation \eqref{eq:holder} we obtain the bound
 $|nk_1+k_2|^{2s_2} \leq
 C_{2s_2} (n^{2s_2} |k_1|^{2s_1} + |k_2|^{2s_2})$ for $s_1\geq s_2$, which leads to
  \[\|\CC^n_{\phi} e_k\|_{H^s}^2 =
 1 + |k_1|^{2s_1} + |2(nk_1+k_2)|^{2s_2} \leq
 \left(1+C_{2s_2}(2n)^{2s_2}\right) \|e_k\|^2_{H^s}.\]
Since $(\CC^n_\phi e_k, \CC^n_\phi e_l)_{H^s}=0$ for $k\neq l$,
the operator norm of $\CC^n_{\phi}$ is bounded from above by
$\left(1+(2n)^{2s_2}\max(1, 2^{2s_2-1})\right)^{1/2}$, resulting in
the upper bound for the spectral radius
 \[r(\CC_{\phi}) =
 \lim_{n\to\infty} \|\CC^n_{\phi}\|^{1/n}
 \leq \lim_{n\to\infty}\left(1+(2n)^{2s_2}\max(1, 2^{2s_2-1})\right)^{1/(2n)}=1
\, .
\]
In order to see that the inequality above is an equality, observe that the operator norm of
$\CC^n_{\phi}$ is bounded from below by $1$ as
 $\|\CC^n_{\phi}e_0\|_{H^s}=\|e_0\|_{H^2}$. Thus $r(\CC_{\phi})=1$.
 \qedhere
\end{proof}

Before proceeding we note that by \eqref{eq:Cphi_n}, the action of the
composition operator on $H^s(\Omega)$ can be represented
by the action of the matrix
\[ A =
\begin{pmatrix}
1  & 0 \\ 1 & 1
\end{pmatrix} \]
on Fourier coefficients.
In particular, we have
\begin{equation}\label{eq:TMatCirc}
\CC^{n}_\phi e_{k} = (-1)^{k_1n} e_{A^nk}.
\end{equation}

For $K \in \mathbb{N}$ define
$\Lambda_K = \Lambda_K^0 = \{(k_1, k_2)\in\Ztwo\colon |k_1|<K, |k_2|<K\}$, and let
$\Lambda_K^n = A^n(\Lambda_K)$. Then for any $n\in\mathbb{N}_0$ we can define a finite-rank operator  $P_{\Lambda_K^n}\colon
H^s(\Omega) \to H^s(\Omega)$ by

\begin{equation}\label{eq:PK}
(P_{\Lambda^n_K}f)(x,y) = \sum_{k\in\Lambda^n_K} c_k(f) e_k(x,y), \quad (x,y)\in \Omega\,.
\end{equation}

\begin{lem}\label{lem:CphiPK}
Let $\CC_\phi$ and $P_{\Lambda_K}$ be as above. Then
\[\CC^n_\phi P_{\Lambda_K} = P_{\Lambda^n_K} \CC^n_\phi\]
for any $n, K \in \None_0$.
\end{lem}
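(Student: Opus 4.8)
The plan is to verify the operator identity by testing it against the orthonormal basis $\{e_k : k\in\Ztwo\}$. Both sides are bounded linear operators on $H^s(\Omega)$: the composition power $\CC^n_\phi$ is bounded by Lemma~\ref{lem:C}, while $P_{\Lambda_K}$ and $P_{\Lambda^n_K}$ are finite-rank and hence bounded. Since each $e_k$ lies in $H^s(\Omega)$ and their finite linear combinations are dense in $H^s(\Omega)$, it suffices to show that the two sides agree on every $e_k$; the identity on all of $H^s(\Omega)$ then follows by continuity.

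First I would evaluate the left-hand side. For fixed $k\in\Ztwo$ the definition \eqref{eq:PK} gives $P_{\Lambda_K}e_k = e_k$ if $k\in\Lambda_K$ and $P_{\Lambda_K}e_k = 0$ otherwise. Applying \eqref{eq:TMatCirc} then yields $\CC^n_\phi P_{\Lambda_K}e_k = (-1)^{k_1 n}e_{A^n k}$ in the first case and $0$ in the second. For the right-hand side I would instead apply \eqref{eq:TMatCirc} first, obtaining $\CC^n_\phi e_k = (-1)^{k_1 n}e_{A^n k}$, so that $P_{\Lambda^n_K}\CC^n_\phi e_k = (-1)^{k_1 n}P_{\Lambda^n_K}e_{A^n k}$; by \eqref{eq:PK} this equals $(-1)^{k_1 n}e_{A^n k}$ exactly when $A^n k \in \Lambda^n_K = A^n(\Lambda_K)$ and vanishes otherwise.

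The single point on which the two case distinctions must be reconciled — and which I regard as the crux of the argument — is the observation that $A^n$ restricts to a bijection of $\Ztwo$. Because $\det A = 1$, the matrix $A^n$ lies in $SL_2(\mathbb{Z})$ and its inverse $A^{-n}$ again has integer entries, so $A^n$ maps $\Ztwo$ bijectively onto itself. Consequently $A^n k \in A^n(\Lambda_K)$ holds if and only if $k \in \Lambda_K$, so the membership condition governing the right-hand side coincides precisely with the one governing the left-hand side. This is the only place where the specific definition of $\Lambda^n_K$ as the image $A^n(\Lambda_K)$ is used, and it is genuinely essential: for a generic index set the identity would fail.

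With this equivalence established, both sides equal $(-1)^{k_1 n}e_{A^n k}$ when $k\in\Lambda_K$ and both vanish when $k\notin\Lambda_K$, so they agree on every basis vector and hence as operators. The boundary cases in $\None_0$ require no extra work: for $n=0$ one has $A^0 = I$, $\CC^0_\phi = I$ and $\Lambda^0_K = \Lambda_K$, so both sides reduce to $P_{\Lambda_K}$, while for $K=0$ the index set $\Lambda_0$ is empty and both sides are the zero operator.
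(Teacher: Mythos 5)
Your proof is correct and follows exactly the paper's own argument: the paper's one-line proof ("checking the equality for any basis element $e_k$ and noting that $A^n$ is invertible") is precisely your computation via \eqref{eq:TMatCirc} together with the observation that $A^n \in SL_2(\mathbb{Z})$ maps $\Ztwo$ bijectively onto itself, so $A^nk \in A^n(\Lambda_K)$ if and only if $k \in \Lambda_K$. You have simply spelled out the details, including the density/continuity step and the boundary cases $n=0$, $K=0$, which the paper leaves implicit.
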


\begin{proof}
This follows by checking the equality for any basis element $e_k$ and
noting that $A^n$ is invertible.
\end{proof}

\begin{defn}\label{def:damping}
Let $\MM_w$ denote the multiplication operator as defined in equation
\eqref{eq:Mw},
considered as an operator on $H^s(\Omega)$, with a smooth weight function
$w\colon \Omega \to [0,\infty)$. In addition, we assume that $w$ has the following properties:
\begin{enumerate}[(a)]
 \item $\|w\|_\infty = \sup_{x\in \Omega} |w(x)| < 1$;
 \item $w$ is bounded away from zero;
 \item $w(x,y)=w(x',y)$ for any $(x,y), (x',y)\in \Omega$, that is, the weight
 $w$ does not depend on the first argument.
 % $w = w \circ \phi$ for $\phi$ in \Eref{eq:phi}, \corr{i.e. $w$ only depends on the reflection angle % $y$.}
\end{enumerate}
\end{defn}

\begin{rem} The operator $\MM_w$ models the effect of damping on the
  motion of the billiard particle. Assumptions (a) and (b) imply that
  the damping is well-behaved, while assumption (c) is innocuous,
  given the circular symmetry of the billiard table.
\end{rem}

The following two lemmas summarise basic properties of $\MM_w$ and $\CC_\phi$.
\begin{lem}\label{lem:identities} Let $\MM_w, \CC_\phi$ and $P_{\Lambda_K}$
be as above. Then we have the following.
\begin{enumerate}[(i)]
\item $\MM_w\CC_\phi = \CC_\phi \MM_w$;
\item $D_{x}\CC_\phi = \CC_\phi D_{x}$;
\item $D_{x}\MM_w = \MM_wD_x$;
\item $D_{y}\CC^n_\phi = 2n\CC^n_\phi D_x + \CC^n_\phi D_y$ for
$n\in \None$;
\item $D_y \MM_w^n = n \MM_w^{n-1}\MM_{D_yw} + \MM_w^nD_y$ for
$n\in \None$;
% \item $P_{\Lambda_K} D_x = D_x P_{\Lambda_K}$ and
% $P_{\Lambda_K} D_y = D_y P_{\Lambda_{K}}$ for $K\in\None_0$
\item $D_x P_{\Lambda_K}=P_{\Lambda_K} D_x$ and
$D_y P_{\Lambda_{K}}=P_{\Lambda_K} D_y$ for $K\in\None$.
\end{enumerate}
\end{lem}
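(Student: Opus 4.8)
The plan is to treat all six statements as algebraic operator identities and verify each by direct computation, first on the dense subspace of trigonometric polynomials (finite linear combinations of the $e_k$), where every operation is unambiguous and the unbounded operators $D_x$ and $D_y$ act without fuss, and then to extend by continuity/closedness wherever an identity is needed between genuinely unbounded operators. The computations split naturally into two groups: the plain commutation relations (i)--(iii) and (vi), and the two identities (iv)--(v) that carry a correction term.

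For the commutation relations I would argue as follows. For (i), the point is that $\phi$ from \eqref{eq:phi} fixes the second coordinate, $\phi(x,y)=(x-\pi+2y,y)$, while by assumption (c) of Definition~\ref{def:damping} the weight $w$ depends only on $y$; hence $w\circ\phi=w$, and feeding this into the definitions \eqref{eq:Cphi} and \eqref{eq:Mw} gives $\MM_w\CC_\phi=\CC_\phi\MM_w$ at once. For (ii), since $\phi$ acts as a unit translation in the $x$-variable, the chain rule shows that $D_x$ passes through $\CC_\phi$ unchanged. For (iii), $\partial_x w=0$ again by assumption (c), so $D_x$ and $\MM_w$ commute. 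For (vi) I would pass to the Fourier basis: by construction $D_x e_k=\mathrm{i}k_1 e_k$ and $D_y e_k=2\mathrm{i}k_2 e_k$, so both derivatives act diagonally, whereas $P_{\Lambda_K}$ in \eqref{eq:PK} is the coordinate projection onto $\{e_k:k\in\Lambda_K\}$; diagonal operators commute with coordinate projections, which yields both claimed identities.

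The two identities with correction terms are the only ones with real content. For (iv) I would use the explicit iterate $\phi^n(x,y)=(x-n\pi+2ny,y)$ underlying \eqref{eq:Cphi_n}: differentiating $(\CC^n_\phi f)(x,y)=f(x-n\pi+2ny,y)$ in $y$, the chain rule produces the $y$-derivative of the sheared first argument, whose slope $2n$ gives exactly the extra term $2n\,\CC^n_\phi D_x$ alongside $\CC^n_\phi D_y$. Alternatively one can establish the base case $D_y\CC_\phi=2\CC_\phi D_x+\CC_\phi D_y$ and induct on $n$, using (ii) to move $D_x$ past $\CC^n_\phi$. For (v), the identity is simply the Leibniz rule applied to $\partial_y(w^n f)$: the term $n\,w^{n-1}(\partial_y w)\,f$ is precisely $n\,\MM_w^{n-1}\MM_{D_y w}f$, while the remaining $w^n\partial_y f$ is $\MM_w^n D_y f$; here too an induction from the product rule $D_y\MM_w=\MM_{D_y w}+\MM_w D_y$ is equally clean.

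I do not expect a genuine obstacle: the substance is careful bookkeeping of the chain and product rules, together with the two structural facts that $\phi$ preserves the $y$-coordinate and that $w$ is $x$-independent. The only point requiring a little care is that $D_x$ and $D_y$ are unbounded on $H^s(\Omega)$ (they lower smoothness), so strictly these are identities between unbounded operators; I would therefore read each as holding on the common core of trigonometric polynomials and observe that this is all that is needed for the later use of the lemma in the convergence analysis.
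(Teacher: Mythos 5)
Your proposal is correct and follows essentially the same route as the paper's (very terse) proof: items (i) and (iii) from the $x$-independence of $w$ in Definition~\ref{def:damping}(c), items (ii) and (iv) by direct computation (chain rule) with the explicit form of $\phi$, item (v) by the Leibniz rule, and item (vi) from the diagonal action $c_k(D_xf)=\mathrm{i}k_1c_k(f)$, $c_k(D_yf)=2\mathrm{i}k_2c_k(f)$ on Fourier coefficients. Your added remark about reading the identities on the dense core of trigonometric polynomials (since $D_x$, $D_y$ are unbounded on $H^s(\Omega)$) is a careful touch the paper leaves implicit, but it does not change the substance of the argument.
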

\begin{proof}
Items {\it (i)} and {\it (iii)} follow from Definition \ref{def:damping}(c);
items {\it (ii)} and {\it (iv)} follow by direct computation using the map $\phi$;
item {\it (v)} is obvious and {\it (vi)} is a direct consequence
of the relations
$c_k(D_xf) = (\mathrm{i}k_1)c_k(f)$ and $c_k(D_yf) = (2\mathrm{i}k_2)c_k(f)$.
\end{proof}

We write $\LL_K = P_{\Lambda_K} \MM_w \CC_\phi P_{\Lambda_K}$ for the finite-rank approximation of $\LL=\MM_w\CC_\phi$.
Using Lemma \ref{lem:identities} {\it (i)}
and Lemma \ref{lem:CphiPK}, we can write
$\LL_K^n$ for $n\in\None$ as
\begin{equation}\label{eq:LK_decomp}
\LL^n_K = (P_{\Lambda_K} \MM_w \CC_\phi P_{\Lambda_K})^n = P_{\Lambda_K}
\left(\prod_{l=1}^n \MM_w P_{\Lambda^l_K}\right) \CC^n_{\phi}.
\end{equation}

In order to state the properties of $\LL$ and $\LL_K$ we need to introduce the
following multi-index notation:
an $n$-dimensional multi-index is an $n$-tuple
$\boldsymbol{i}_n = (i_1, i_2, \ldots, i_n)$
of non-negative integers of order
$|\boldsymbol{i_n}| = i_1 + i_2 + \cdots + i_n = m$; the corresponding
multinomial coefficient is given by
\[ \binom{m}{\boldsymbol{i}_n} =
\frac{m!}{i_1!i_2!\cdots i_n!}\, .\]

\begin{lem}\label{lem:identities2}
Let $\MM_w, \CC_\phi$ and $P_{\Lambda^l_K}$ be as above. Then
we have the following.
\begin{enumerate}[(i)]
\item $D_y^m \CC_\phi = \sum_{i=0}^m 2^{m-i}\binom{m}{i}
  \CC_\phi D_x^{m-i} D_y^i$;
\item $D_y^m \CC_\phi^n =
  \sum_{|\boldsymbol{i}_{n+1}|=m}
  2^{m-i_{n+1}} \binom{m}{\boldsymbol{i}_{n+1}}
  \CC_\phi^n D_x^{m-i_{n+1}}D_y^{i_{n+1}}$;

\item $D_y^m \left(\prod_{l=1}^{n} \MM_w P_{\Lambda^l_K}\right) =
\sum_{|\boldsymbol{i}_{n+1}|=m}
\binom{m}{\boldsymbol{i}_{n+1}}
\left(\prod_{l=1}^{n} \MM_{D^{i_l}_yw} P_{\Lambda^l_K}\right) D_y^{i_{n+1}}$.
\end{enumerate}
\end{lem}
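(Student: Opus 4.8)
The plan is to read all three identities as generalised Leibniz and chain rules and to obtain them by induction from the elementary commutation relations collected in Lemma~\ref{lem:identities}. Two facts will be used throughout. First, $D_x$ and $D_y$ commute with one another, being partial derivatives. Second, since $D_x e_k = \mathrm{i}k_1 e_k$ and $D_y e_k = 2\mathrm{i}k_2 e_k$ act diagonally in the Fourier basis, both $D_x$ and $D_y$ commute with every coordinate projection $P_{\Lambda^l_K}$; this is the computation underlying Lemma~\ref{lem:identities}(vi), now applied to the index set $\Lambda^l_K$ in place of $\Lambda_K$.

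For item (i) I would first combine Lemma~\ref{lem:identities}(ii), namely $D_x\CC_\phi = \CC_\phi D_x$, with the case $n=1$ of Lemma~\ref{lem:identities}(iv), which reads $D_y\CC_\phi = \CC_\phi(2D_x + D_y)$. Since $2D_x + D_y$ commutes with both $D_x$ and $D_y$, an easy induction on $m$ gives the compact relation $D_y^m\CC_\phi = \CC_\phi(2D_x+D_y)^m$, and the binomial theorem (applicable because $D_x$ and $D_y$ commute) turns the right-hand side into the claimed sum $\sum_{i=0}^m 2^{m-i}\binom{m}{i}\CC_\phi D_x^{m-i}D_y^i$. For item (ii) the same argument, now starting from the general-$n$ form of Lemma~\ref{lem:identities}(iv), namely $D_y\CC_\phi^n = \CC_\phi^n(2nD_x + D_y)$, together with $D_x\CC_\phi^n = \CC_\phi^n D_x$, yields the compact relation $D_y^m\CC_\phi^n = \CC_\phi^n(2nD_x + D_y)^m$. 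It then remains to recognise the multi-index sum on the right-hand side of (ii) as this same expression: summing the summand over all $i_1,\dots,i_n$ with $i_1 + \cdots + i_n = m - i_{n+1}$ and invoking the multinomial theorem $\sum_{|\boldsymbol{i}_n| = k}\binom{k}{\boldsymbol{i}_n} = n^k$ collapses those indices and reproduces $\CC_\phi^n(2nD_x+D_y)^m$.

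Item (iii) has no such compact form, because the factors $\MM_{D^{i_l}_yw}$ depend on the individual indices $i_l$, so I would prove it by induction on $n$. The base case $n=1$ is the ordinary Leibniz rule $D_y^m\MM_w = \sum_{i=0}^m \binom{m}{i}\MM_{D^i_yw}D_y^{m-i}$, obtained by iterating the case $n=1$ of Lemma~\ref{lem:identities}(v) and combined with the commutation of $D_y$ past $P_{\Lambda^1_K}$. For the inductive step I would split off the last factor, writing $\prod_{l=1}^{n+1}\MM_w P_{\Lambda^l_K} = \left(\prod_{l=1}^{n}\MM_w P_{\Lambda^l_K}\right)\MM_w P_{\Lambda^{n+1}_K}$, apply the inductive hypothesis to the leading block, and then expand the trailing $D_y^{i_{n+1}}\MM_w$ by the Leibniz rule, commuting the surviving powers of $D_y$ past $P_{\Lambda^{n+1}_K}$.

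The conceptual content of every step is elementary; the point where I expect the bookkeeping to be delicate is the recombination of coefficients. In item (iii) each inductive step splits the former last index $i_{n+1}$ into two new indices, and one has to verify both that this gives a bijection onto the multi-indices $\boldsymbol{i}_{n+2}$ of length $n+2$ with $|\boldsymbol{i}_{n+2}| = m$, and that the product of the old multinomial coefficient with the binomial coefficient produced by the Leibniz step collapses to a single multinomial coefficient, via the absorption identity $\binom{m}{\boldsymbol{i}_{n+1}}\binom{i_{n+1}}{s} = \binom{m}{(i_1,\dots,i_n,\,s,\,i_{n+1}-s)}$. The analogous multinomial identity in (ii), together with the correct tracking of the powers of $2$ (the factors $2^{m-i_{n+1}}$), is the only real computation there.
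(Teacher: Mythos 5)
Your proposal is correct; every step checks out, including the two pieces of bookkeeping you flag (the multinomial collapse in (ii) and the absorption identity in (iii)). The route differs from the paper's mainly in item (ii). The paper proves (i) by induction over $m$ and then obtains (ii) by an \emph{additional induction over $n$}, after rewriting (i) in the multi-index form $D_y^m \CC_\phi = \sum_{i_1+i_2=m}2^{i_1}\binom{m}{i_1,i_2}\CC_\phi D_x^{i_1}D_y^{i_2}$ so that each new factor of $\CC_\phi$ splits one index in two; for (iii) it simply says that the claim follows from the Leibniz formula. You instead exploit the fact that Lemma~\ref{lem:identities}(iv) is already stated for general $n$: an induction over $m$ alone gives the compact form $D_y^m\CC_\phi^n=\CC_\phi^n(2nD_x+D_y)^m$, and the multi-index sum in (ii) is then recovered purely algebraically via $\sum_{|\boldsymbol{i}_n|=m-j}\binom{m}{i_1,\ldots,i_n,j}=\binom{m}{j}n^{m-j}$ --- the same multinomial identity the paper invokes only later, in the proof of Lemma~\ref{lem:LK_norms}, to \emph{undo} this sum. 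Your version buys transparency: (ii) becomes an instance of the binomial/multinomial theorem in the commutative algebra generated by $D_x$ and $D_y$, with no induction over $n$. The paper's version buys uniformity, since its $n$-induction is structurally the same as the one that is genuinely unavoidable in (iii), where (as you rightly observe) the index-dependent weights $\MM_{D_y^{i_l}w}$ preclude any compact form. Finally, your fully written-out induction for (iii), together with your explicit remark that $D_x$ and $D_y$ commute with every projection $P_{\Lambda_K^l}$ (Lemma~\ref{lem:identities}(vi) literally covers only $P_{\Lambda_K}$, but the same Fourier-coefficient computation gives the general case), supplies exactly the details that the paper's one-sentence proof leaves tacit.
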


\begin{proof}
% We shall only sketch the proofs, as they are conceptually easy and follow by direct induction
% arguments.
Item {\it (i)} follows by induction over $m$ using Lemma \ref{lem:identities}{\it (iv)} for the base case
$m=1$. For item {\it (ii)}, the additional induction over $n$ follows by
rewriting {\it (i)} as $D_y^m \CC_\phi = \sum_{i_1 + i_2 = m}2^{i_1} \binom{m}{i_1, i_2} \CC_\phi
D^{i_1}_xD_y^{i_2}$.
% whereas the induction in $m$ uses Lemma \ref{lem:identities}(iv) and the multinomial identity $
% \sum_{|\boldsymbol{i}_{n}|=k} \binom{m}{\boldsymbol{i}_{n}, m-k} =n^k \binom{m}{k}$.
Finally, item {\it (iii)} follows from the Leibniz formula.
\end{proof}

We are now ready to prove the main result of this section. Keeping in mind that we assume that
the billiard dynamics is dissipative, that is, the weight is chosen so that $\|w\|_\infty<1$, the
following lemma shows that, given $f_0\in H^s(\Omega)$, the problem \eref{eq:op_eq}
and the projected version \eref{eq:proj_op_eq} have unique
solutions $f\in H^s(\Omega)$ and $f_K\in H^s(\Omega)$, respectively.

\begin{lem}\label{lem:LK_norms}
Consider $\LL$ and $\LL_K$, $K\in \None$, as operators on
$H^{s}(\Omega)$ for $s\in\mathbb{N}_0^2$ with $s_1 \geq s_2 \geq 0$.
Then
\begin{enumerate}[(i)]
\item $(\LL_K)_{K\in \mathbb{N}}$ is a family of bounded operators $H^{s}(\Omega)$ with
norms bounded uniformly in $K$. Moreover, $r(\LL_K) \leq \|w\|_{\infty}$
for all $K$;
\item $\LL$ is a bounded operator on $H^s(\Omega)$ with $r(\LL) \leq \|w\|_{\infty}$.
\end{enumerate}
\end{lem}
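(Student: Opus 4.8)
The plan is to estimate the iterates $\LL_K^n$ and $\LL^n$ directly on $H^s(\Omega)$ and then read off the spectral radii from $r=\lim_{n\to\infty}\|\cdot^n\|^{1/n}$, the uniform norm bound in (i) being the special case $n=1$. Since $s\in\mathbb{N}_0^2$, equation \eqref{eq:Hkf} gives $\|f\|_{H^s}^2=\|f\|_{L^2}^2+\|D_x^{s_1}f\|_{L^2}^2+\|D_y^{s_2}f\|_{L^2}^2$, so it suffices to control each of these three pieces of $\|\LL_K^n f\|_{H^s}$ separately and uniformly in $K$.

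First I would record the elementary embedding estimate that makes the hypothesis $s_1\ge s_2$ work: for non-negative integers $a,b$ with $a+b=s_2$ one has $\|D_x^aD_y^bf\|_{L^2}\le\|f\|_{H^s}$. On the Fourier side this is the pointwise bound $|k_1|^{2a}|2k_2|^{2b}\le 1+|k_1|^{2s_1}+|2k_2|^{2s_2}$, which (in the non-degenerate case $a,b\ge 1$) follows from Young's inequality with exponents $s_2/a$ and $s_2/b$ together with $|k_1|^{2s_2}\le 1+|k_1|^{2s_1}$; the latter holds precisely because $s_2\le s_1$, so this is the only point at which the anisotropy condition enters. The $L^2$ and $D_x^{s_1}$ contributions are then immediate: since $D_x$ commutes with $\MM_w$, $\CC_\phi$ and $P_{\Lambda_K}$ by Lemma \ref{lem:identities} (ii), (iii), (vi), we have $D_x^{s_1}\LL_K^n=\LL_K^nD_x^{s_1}$, and because $\CC_\phi$ is an $L^2$-isometry, $\|\MM_w\|_{L^2}\le\|w\|_\infty$, and the projections are contractions, both $\|\LL_K^n f\|_{L^2}$ and $\|D_x^{s_1}\LL_K^n f\|_{L^2}$ are at most $\|w\|_\infty^n\|f\|_{H^s}$.

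The substance of the proof is the $D_y^{s_2}$ piece, which I expect to be the main obstacle, since $D_y$ does not commute with $\CC_\phi$ and $r(\CC_\phi)=1$ by Lemma \ref{lem:C}. Starting from the decomposition \eqref{eq:LK_decomp} and using that $D_y$ commutes with the projections, I would push all $D_y$-derivatives to the right via Lemma \ref{lem:identities2}: item (ii) moves $D_y^{s_2}$ through $\CC_\phi^n$ at the cost of at most $s_2$ factors of $2n$, converting derivatives into $D_x$-derivatives of total order at most $s_2$, while item (iii) distributes the remaining $D_y$-derivatives across $\prod_{l=1}^n\MM_wP_{\Lambda_K^l}$, producing multiplication operators $\MM_{D_y^{i_l}w}$ that are bounded because $w$ is smooth. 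The result is a sum of $O(n^{s_2})$ terms, each of the form (polynomial in $n$ of degree at most $s_2$) times (a product of contractions and the isometry $\CC_\phi^n$, of $L^2$-norm at most $C_w\|w\|_\infty^{n-s_2}$) applied to some $D_x^aD_y^bf$ with $a+b\le s_2$ and hence $a\le s_2\le s_1$. The embedding estimate of the previous paragraph bounds each $\|D_x^aD_y^bf\|_{L^2}$ by $\|f\|_{H^s}$, yielding $\|D_y^{s_2}\LL_K^n f\|_{L^2}\le C(s,w)\,Q(n)\,\|w\|_\infty^{n-s_2}\|f\|_{H^s}$ for a fixed polynomial $Q$ with constants independent of $K$.

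Combining the three estimates gives $\|\LL_K^n\|_{H^s}\le C(s,w)\,Q(n)\,\|w\|_\infty^{n}$, uniformly in $K$; taking $n=1$ proves the uniform boundedness asserted in (i), and since $Q(n)^{1/n}\to 1$ we obtain $r(\LL_K)=\lim_{n\to\infty}\|\LL_K^n\|_{H^s}^{1/n}\le\|w\|_\infty$. For (ii), the same argument applied to $\LL^n=\MM_w^n\CC_\phi^n$ (using commutativity, Lemma \ref{lem:identities} (i)) with the projections removed gives $\|\LL^n\|_{H^s}\le C(s,w)\,Q(n)\,\|w\|_\infty^n$ and hence $r(\LL)\le\|w\|_\infty$, completing the proof.
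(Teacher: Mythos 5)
Your proposal is correct and follows essentially the same route as the paper's proof: decompose $\|\LL_K^n f\|_{H^s}^2$ into its $L^2$, $D_x^{s_1}$ and $D_y^{s_2}$ components, dispose of the first two by the commutation relations of Lemma \ref{lem:identities} together with the facts that $\CC_\phi$ is an $L^2$-isometry and the projections are contractions, control the $D_y^{s_2}$ component by pushing derivatives through the product \eqref{eq:LK_decomp} via Lemma \ref{lem:identities2}, and conclude with Gelfand's formula, the polynomial-in-$n$ prefactor being harmless. The only (minor) divergence is in bounding the resulting products of multiplication operators: you count the at most $s_2$ factors carrying a derivative of $w$ to get $C_w\|w\|_\infty^{n-s_2}$, which in fact avoids invoking the assumption that $w$ is bounded away from zero, whereas the paper factors $\MM_{D_y^l w}=\MM_w\MM_{(D_y^l w)/w}$ and does use that assumption.
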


\begin{proof}
We shall only prove statement \textit{(i)}, as the proof of statement
\textit{(ii)} follows by almost identical arguments.
In the following, we shall assume that
$s_1\geq s_2 \geq 1$, as the case $s_1s_2=0$ follows by identical arguments.
For $f\in H^s(\Omega)$ we have
\begin{equation}\label{eq:normLK_s}
\|\LL^n_Kf\|^2_{H^s} =
\|\LL^n_Kf\|^2_{L^2} + |
|D^{s_1}_x\LL^n_Kf\|^2_{L^2} + \|D^{s_2}_y\LL^n_Kf\|^2_{L^2}\,.
\end{equation}

Let $p,q\in\None$ with $p \leq s_1$ and $q \leq s_2$.
It is not difficult to see that for any $f\in H^s(\Omega)$ and
$K \in \None_0$ the following holds.
\begin{enumerate}[(a)]
\item $\|P_{\Lambda_K^j} f\|_{L^2} \leq \|f\|_{L^2}$ for any $j\in \None_0$;
\item $\|\MM_wf\|_{L^2}  \leq \|w\|_{\infty} \|f\|_{L^2}$;
\item $\|D^p_xf\|^2_{L^2} \leq \|D^{s_1}_xf\|^2_{L^2}$ and
$\|D^q_xf\|^2_{L^2} \leq \|D^{s_2}_xf\|^2_{L^2}$;
\item $\|D^p_xD^q_yf\|^2_{L^2} \leq
\|D^{p+q}_xf\|^2_{L^2} + \|D^{p+q}_yf\|^2_{L^2}$
wherever $p+q\leq s_2$.
\end{enumerate}
Here, statements (c) and (d) follow by writing the $L^2$ norm of $D_x^pf$ and
$D_y^qf$ using Parseval's identity.

Writing $\LL^n_K$ as in equation \eqref{eq:LK_decomp} and using (a) and (b) above iteratively we have
\begin{equation}\label{eq:const_term}
\|\LL^n_Kf\|_{L^2}  =
\|(P_{\Lambda_K} \MM_w \CC_\phi P_{\Lambda_K})^nf\|_{L^2} \leq
\|w\|^n_{\infty} \|\CC^n_\phi f\|_{L^2}
\leq \|w\|^n_{\infty} \|f\|_{L^2},
\end{equation}
where the last inequality follows from the fact that the
operator norm of $\CC_\phi$ on $L^2(\Omega)$ equals $1$.

As
$D_x$ commutes with any of the operators involved
(Lemma \ref{lem:identities} \textit{(ii,iii,vi)}) we have
in the second term on the right-hand side of \eqref{eq:normLK_s}
that $D^{s_1}_x\LL_K^n =
\LL_K^nD^{s_1}_x$. By the same argument as above we have
\begin{equation}\label{eq:Dx_term}
\|D^{s_1}_x\LL^n_Kf\|_{L^2} \leq \|w\|^n_{\infty} \|D^{s_1}_xf\|_{L^2}.
\end{equation}
In order to bound the last term in equation \eqref{eq:normLK_s} we are
using  Lemma \ref{lem:identities2}\textit{(iii)} and H\"older's inequality in order to write
%%\begin{equation}
\[
\|D_y^{s_2} \LL_K^nf \|_{L^2}^2
= \|\sum_{j=0}^{s_2} A_j D^j_y\CC_\phi^n f\|^2_{L^2}
\leq (s_2+1) \sum_{j=0}^{s_2}\|A_j\|^2_{L^2}
\| D^j_y\CC_\phi^n f\|^2_{L^2},
\]
%%\end{equation}
where
$A_j = \sum_{|\boldsymbol{i}_n|=s_2-j}
\binom{s_2}{\boldsymbol{i}_n,j}
\left(\prod_{l=1}^{n}  M_{D^{i_l}_yw} P_{\Lambda^l_K}\right)$.

We shall first obtain a bound for $\|D^j_y\CC_\phi^n f\|_{L^2}$.
Using Lemma \ref{lem:identities2}\textit{(ii)} and
decomposing the sum in terms of powers of $D_x$ and $D_y$
we obtain
\[D^j_y\CC_\phi^n   =
 (2n)^{j} C_{\phi}^n D_x^j + C_{\phi}^n D_y^j +
%%\sum_{\substack{|\boldsymbol{i}_{n+1}|=j \\ 0 < i_{n+1} < j}}
\sum_{|\boldsymbol{i}_{n+1}|=j \atop 0 < i_{n+1} < j}
2^{j-i_{n+1}} \binom{j}{\boldsymbol{i}_{n+1}}
\CC_\phi^n D_x^{j-i_{n+1}}D_y^{i_{n+1}}, \]
where we have used the multinomial formula $\sum_{|\boldsymbol{i}_{n}|=k}
\binom{k}{\boldsymbol{i}_{n}}=n^k$.
Thus, for $j\leq m$ we obtain
using H\"older's inequality, the multinomial formula and
upper bounds for $2^{j-i_{n+1}}$ %by $2^j$
\begin{eqnarray}\label{eq:DjyCphi_norm}
\|D^j_y\CC_\phi^n f\|^2_{L^2}
&\leq& 2^j(n+1)^j \left(
2^jn^j  \|D_x^jf\|^2_{L^2} + \|D_y^jf\|^2_{L^2} \right) \nonumber \\
& &+ 2^j(n+1)^j \left((2^j(n+1)^{j}-2^jn^j-1) \max_{0<i<j}
\|D_x^{j-i}D_y^{i}f\|^2_{L^2}\right)  \nonumber \\
& \leq& 2^{2s_2}(n+1)^{2s_2} \left(\|D_x^{s_1}f\|^2_{L^2} + \|D_y^{s_2}f\|^2_{L^2}\right),
\end{eqnarray}
where the last inequality uses (c) and (d).

Next we shall obtain a bound on the operator norm of $A_j$ for $j\leq s_2$.
First note that $\MM_{D_y^lw} = \MM_w \MM_{(D_y^lw)/w}$ is
well-defined as $w$ is bounded away from zero.
By using (a) and (b) iteratively, for any
$\boldsymbol{i}_n =(i_1, \ldots, i_n)$ with $|\boldsymbol{i}_n| = s_2$ we have
\[
\|\prod_{l=1}^{n}  \MM_{D^{i_l}_yw} P_{\Lambda^l_K}f\|^2_{L^2}
\leq C_{s_2} \|w\|^{2n}_\infty \|f\|^2_{L^2}
\]
where $
%%C_{s_2} =\max_{\substack{i_1,\ldots, i_n \\|\boldsymbol{i}_{n}|=s_2}}
C_{s_2} = \max_{i_1,\ldots, i_n \atop |\boldsymbol{i}_{n}|=s_2}
\left(\prod_{l=1}^{n}
\|D^{i_l}_yw/w\|^2_\infty\right) \leq
\max_{0\leq l \leq s_2} \|D^{l}_yw/w\|_\infty^{2s_2}$
is a constant independent of $n$.
Using arguments analogous to those used to obtain inequality \eref{eq:DjyCphi_norm},
we obtain the bound
\begin{equation}\label{eq:Ajnorm}
\|A_j\|_{L^2}^2 \leq (n+1)^{2s_2} C_{s_2} \|w\|_\infty^{2n}\,. %\|f\|^2_{L^2}
\end{equation}

Using the estimates \eref{eq:const_term},
\eref{eq:Dx_term}, \eref{eq:DjyCphi_norm} and \eref{eq:Ajnorm} in
equation \eqref{eq:normLK_s} we arrive at the bound
\[\|\LL^n_Kf\|^2_{H^s} \leq \tilde{C}_{n,s_2} \|w\|_\infty^{2n} \|f\|^2_{H^2}\,\]
with $\tilde{C}_{n,s_2} \leq (s_2+1)s_2(n+1)^{4s_2}2^{2s_2}C_{s_2} + 1$.
As $\tilde{C}_{n,s_2}$ is independent of $K$, the family
$(\LL_K)_{K\in \mathbb{N}}$ is a uniformly bounded family of bounded operators on $H^s(\Omega)$.
Finally, taking the right hand side of
equation \eqref{eq:normLK_s} to the power of $1/n$ and observing that
$\tilde{C}_{n,s_2}$ grows polynomially in $n$,  the upper bound for
the spectral radius of $\LL_K$ follows.
\end{proof}

\section{Convergence properties}\label{sec:conv}
In the previous section we established (see Lemma~\ref{lem:LK_norms}) that
given $f_0\in H^s(\Omega)$, the problem \eref{eq:op_eq}
and the projected version \eref{eq:proj_op_eq} have unique
solutions $f\in H^s(\Omega)$ and $f_K\in H^s(\Omega)$, respectively. We shall now turn to
establishing  the convergence of $f_K$ to $f$. This would be straightforward if we knew that
$\LL_K\to \LL$ as $K\to \infty$ in the operator norm on $H^s(\Omega)$, since then, using the
so-called second resolvent identity
\begin{equation}
\label{eq:sec_res_id}
(I-\LL)^{-1}-(I-\LL_K)^{-1}=
-(I-\LL)^{-1}(\LL-\LL_K)(I-\LL_K)^{-1}\,,
\end{equation}
we would have
\[ \|f-f_K\|_{H^s}=\|(I-\LL)^{-1}f_0-(I-\LL_K)^{-1}f_0 \|_{H^s}=
\|(I-\LL)^{-1}(\LL_K-\LL)(I-\LL_K)^{-1}f_0 \|_{H^s}\,,
\]
from which convergence of $f_K\to f$ in $H^s(\Omega)$ could be readily obtained.
%% This also uses the fact that $\|(I-\LL_K)^{-1}\|$ is uniformly bounded in $K$.

This, however,
cannot be the case, as if $\LL_K\to \LL$ as $K\to \infty$ in the operator norm on $H^s(\Omega)$,
then $\LL$, as a uniform limit of finite-rank operators, would be compact on $H^s(\Omega)$.
However, as $\LL$ has a bounded inverse on $H^s(\Omega)$, it cannot be compact.
% This, however, cannot be the case as $\LL$ has a bounded inverse on $H^s(\Omega)$ and thus
% cannot be compact.

We thus need to resort to a slightly weaker notion of convergence, that is, we shall consider the
transfer operator as an operator between Sobolev spaces of different order. In passing, we
remark that this idea is also at the heart of one of the most successful techniques to obtain
spectral approximation results of transfer operators, where perturbation sizes are measured in
`triple' norms (see, for example, \cite{KL}).

In the following we shall explain this idea in more detail. We start with the following important
observation. For $t, s \in [0,\infty)^2$ with
$s_1 \geq s_2 > t_1 \geq t_2 \geq 0$,
functions in $H^s(\Omega)$ can be identified with functions in $H^t(\Omega)$ using the
embedding operator $\JJ\colon H^s(\Omega) \hookrightarrow H^t(\Omega)$ given by
$\JJ f = f$. This operator is not just continuous, but also compact, as the following
lemma shows.

\begin{lem}\label{lem:J}
Let $\JJ\colon H^s(\Omega) \hookrightarrow H^t(\Omega)$
be the canonical embedding, where $t, s \in [0,\infty)^2$ with
$s_1 \geq s_2 > t_1 \geq t_2 \geq 0$. Let $P_K = P_{\Lambda_K}$ the projection operator
in equation \eqref{eq:PK}, and $\JJ_K=\JJ P_{K}$. Then,
\[ \|\JJ - \JJ_K\|_{H^s\to H^t} \leq C (1+K^2)^{-\alpha/2} \]
for some $C>0$ and $\alpha=(s_2 - t_1)$.
\end{lem}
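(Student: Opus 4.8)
The plan is to exploit that $\JJ$ and the projection $P_K=P_{\Lambda_K}$ both act diagonally on the Fourier basis $\{e_k\}$: indeed $(\JJ-\JJ_K)e_k=e_k$ for $k\notin\Lambda_K$ and $(\JJ-\JJ_K)e_k=0$ otherwise, so $\JJ-\JJ_K$ is a Fourier multiplier. Since by \eqref{eq:Hkf} both the $H^s$- and the $H^t$-norm are weighted $\ell^2$-norms of the coefficient sequence $(c_k(f))$, with $\{e_k\}$ orthogonal in each space, the operator norm of a diagonal operator equals the supremum of the pointwise ratio of weights. Hence the first step is to record the identity
\[
\|\JJ-\JJ_K\|_{H^s\to H^t}^2
=\sup_{k\notin\Lambda_K}
\frac{1+|k_1|^{2t_1}+|2k_2|^{2t_2}}{1+|k_1|^{2s_1}+|2k_2|^{2s_2}}\,,
\]
which reduces the whole lemma to the elementary claim that this supremum is at most $C^2(1+K^2)^{-\alpha}$ with $\alpha=s_2-t_1$.

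Second, I would estimate this ratio $N/D$ by a case analysis on the mode $k$. Writing $a=|k_1|$ and $b=|k_2|$, the condition $k\notin\Lambda_K$ means $a\geq K$ or $b\geq K$, so I split into (i) $a\geq K$ and (ii) $a<K\leq b$. In each case I bound the numerator term by term against $D$, using that each of $1/D$, $a^{2t_1}/D$, $(2b)^{2t_2}/D$ is at most a constant times $K^{-2\beta}$, where $\beta$ is read off by comparing the relevant powers (for instance $1/D\leq a^{-2s_1}\leq K^{-2s_1}$ in case (i), and $a^{2t_1}/D\leq K^{2t_1}/(2b)^{2s_2}\leq 4^{-s_2}K^{-2(s_2-t_1)}$ in case (ii)). Collecting all the exponents $\beta$ that occur — namely $s_1$, $s_1-t_1$, $s_1-t_2$, $s_2$, $s_2-t_2$ and $s_2-t_1$ — the ordering $s_1\geq s_2>t_1\geq t_2\geq0$ shows that the smallest, and therefore the binding rate, is exactly $s_2-t_1=\alpha$. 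This yields the pointwise bound $N/D\leq C'K^{-2\alpha}$, and since $1+K^2\leq 2K^2$ gives $K^{-2\alpha}\leq 2^{\alpha}(1+K^2)^{-\alpha}$ for $K\geq1$, taking square roots produces the claimed estimate.

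I expect the only genuine subtlety to be identifying the extremal mode, i.e. the bookkeeping that singles out $\alpha=s_2-t_1$ as the binding exponent. This is forced by case (ii), where a mode with small $k_1$ but large $k_2$ pits the numerator's $k_1$-weight $|k_1|^{2t_1}$ against the denominator's $k_2$-weight $|2k_2|^{2s_2}$; all other combinations of terms decay at least as fast, so the rate cannot be improved. A minor amount of care is needed for the degenerate modes $a=0$, $b=0$, or the flat weight $t_2=0$, but these are absorbed by the constant term in the numerator and do not affect the rate.
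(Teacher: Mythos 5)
Your proof is correct, and it reaches the same essential reduction as the paper: everything comes down to a pointwise bound on the weight ratio $a_t(k)/a_s(k)$, with $a_r(k)=1+|k_1|^{2r_1}+|2k_2|^{2r_2}$, over modes $k\notin\Lambda_K$. The implementations differ in two respects. First, you record the exact multiplier identity $\|\JJ-\JJ_K\|_{H^s\to H^t}^2=\sup_{k\notin\Lambda_K}a_t(k)/a_s(k)$, whereas the paper only uses (implicitly) the upper-bound direction, estimating $\|(\JJ-\JJ_K)f\|_{H^t}^2$ as a sum split over three regions $I_1,I_2,I_3$ of the complement of $\Lambda_K$. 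Your equality is a small genuine addition: it substantiates your closing remark that $\alpha=s_2-t_1$ is the best exponent this embedding estimate can deliver, attained along modes with $|k_1|$ small and $|k_2|$ large. Second, the ratio is bounded by different means: you compare numerator against denominator term by term and minimize over the six resulting exponents, while the paper factors through the isotropic weight, showing $a_t(k)\leq 3\left(1+|k_1|^2+|2k_2|^2\right)^{t_1}$ and $\left(1+|k_1|^2+|2k_2|^2\right)^{s_2}\leq C_{s_2}\,a_s(k)$ (the latter via H\"older's inequality and $s_1\geq s_2$), whence $a_t(k)/a_s(k)\leq 3C_{s_2}\left(1+|k_1|^2+|2k_2|^2\right)^{-(s_2-t_1)}$, and then bounds the isotropic weight below by $1+K^2$ on each region. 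The paper's route avoids your exponent bookkeeping and sidesteps the degenerate modes $k_1=0$, $k_2=0$, $t_2=0$ entirely (the isotropic weight never vanishes), while your route is more elementary and makes the extremal mode explicit; both give the same rate under the same hypotheses, so your argument is a valid alternative proof.
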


\begin{proof}
Let $f\in H^s(\Omega)$. Using the notation
$a_t(k) =  1+ |k_1|^{2t_1} + |2k_2|^{2t_2}$ we have
\[\|\JJ f - \JJ_Kf\|^2_{H^t} =
\sum_{i=1}^3 \sum_{k \in I_i(K)} |c_k(f)|^2 a_t(k),\]
with $I_1(K) = \{k\in \Ztwo\colon |k_1|\geq K, |k_2|\geq K\}$,
$I_2(K) = \{k\in \Ztwo\colon |k_1|< K, |k_2|\geq K\}$,
$I_3(K) = \{k\in \Ztwo\colon |k_1|\geq K, |k_2|< K\}$.
We will first show that there exists a constant $C'$ such that
\[a_t(k) \leq C' (1+|k_1|^2 + |2k_2|^2)^{-\alpha} a_s(k) \, . \]
For this, first observe that
\[(1+ |k_1|^{2} + |2k_2|^{2})^{s_2} \leq C_{s_2}
(1+ |k_1|^{2s_1} + |2s_2|^{2s_2}) \leq  C_{s_2} a_s(k), \]
which follows by H\"older's inequality and $s_1 \geq s_2$. Then,
% \begin{eqnarray*}
\begin{align*}
a_t(k) &= 1+ |k_1|^{2t_1} + |2k_2|^{2t_1} \leq 3(1+ |k_1|^{2} + |2k_2|^{2})^{t_1} \\
 & = 3(1+ |k_1|^{2} + |2k_2|^{2})^{t_1-s_2} (1+ |k_1|^{2} + |2k_2|^{2})^{s_2} \\
 & \leq 3 C_{s_2} (1+ |k_1|^{2} + |2k_2|^{2})^{t_1-s_2} a_s(k).
\end{align*}
%\end{eqnarray*}

Now, by bounding from above
%upper bounding
each $ (1+|k_1|^2 +|2k_2|^2)^{-\alpha}$ with its maximal value in each
of the sums, we obtain
\begin{align*}
\|\JJ f - \JJ_Kf\|^2_{H^t}
&\leq C'\left((1+5K^2)^{-\alpha} +
(1+4K^2)^{-\alpha} +
(1+K^2)^{-\alpha}\right) \|f\|^2_{H^s} \\
&\leq 3C'(1+K^2)^{-\alpha} \|f\|^2_{H^s}.\qedhere
\end{align*}
\end{proof}

%% Say something about optimal speed?

We are now able to show that $\LL$ can be approximated by finite-rank operators
when considered as operators from $H^s$ to $H^t$.
\begin{prop}\label{prop:JLk}
Let $\LL_K = P_K \LL P_K$ be the finite-rank approximation of $\LL$
on $H^s(\Omega)$ with $s\in\None^2$ and $s_1 \geq s_2$.
Let $\JJ$ be as above and $t\in\None_0^2$ with $s_2>t_1\geq t_2$.
Then
\[\|\JJ(\LL_K - \LL)\|_{H^s\to H^t}
\leq C  (1+K^2)^{-\alpha/2}  \]
for some $C>0$ and $\alpha = s_2-t_1$.
\end{prop}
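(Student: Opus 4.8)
The plan is to reduce everything to the embedding estimate of Lemma~\ref{lem:J} together with the uniform boundedness of $\LL$ established in Lemma~\ref{lem:LK_norms}, avoiding any direct Fourier computation. First I would telescope the difference $\LL_K-\LL=P_K\LL P_K-\LL$ by adding and subtracting $P_K\LL$, writing
\[
\LL_K-\LL = P_K\LL(P_K-I)+(P_K-I)\LL,
\]
so that, after composing with $\JJ$, it suffices to bound the two pieces $\JJ P_K\LL(P_K-I)$ and $\JJ(P_K-I)\LL$ separately, each by a constant multiple of $(1+K^2)^{-\alpha/2}$.

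The second piece is immediate. Since $\JJ(I-P_K)=\JJ-\JJ P_K=\JJ-\JJ_K$, and since $\LL$ is bounded on $H^s(\Omega)$ by Lemma~\ref{lem:LK_norms}\textit{(ii)}, I would estimate
\[
\|\JJ(P_K-I)\LL\|_{H^s\to H^t}\le \|\JJ-\JJ_K\|_{H^s\to H^t}\,\|\LL\|_{H^s\to H^s}
\le C\,(1+K^2)^{-\alpha/2}\,\|\LL\|_{H^s\to H^s},
\]
where the last inequality is exactly Lemma~\ref{lem:J} (whose hypotheses $s_1\ge s_2>t_1\ge t_2\ge 0$ are met).

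The first piece is the step I expect to be the genuine obstacle, because here the projection sits \emph{between} $\JJ$ and $\LL$, so Lemma~\ref{lem:J} cannot be applied directly and the gain from the high-frequency truncation must be transported across $\LL$. The observation I would exploit is that Fourier truncation commutes with the embedding, $\JJ P_K=P_K\JJ$ as operators $H^s(\Omega)\to H^t(\Omega)$, and that $P_K$ is a norm-one orthogonal projection on $H^t(\Omega)$. Hence for $f\in H^s(\Omega)$,
\[
\|\JJ P_K\LL(P_K-I)f\|_{H^t}\le \|\JJ\LL(I-P_K)f\|_{H^t}=\|\LL(I-P_K)f\|_{H^t}.
\]
The crucial point is now that $\LL$ is bounded not only on $H^s(\Omega)$ but also on the weaker space $H^t(\Omega)$: since $t\in\None_0^2$ with $t_1\ge t_2\ge 0$, Lemma~\ref{lem:LK_norms}\textit{(ii)} applies verbatim with $s$ replaced by $t$. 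Combining this with the identity $(I-P_K)f=(\JJ-\JJ_K)f$ read in $H^t(\Omega)$, I would conclude
\[
\|\LL(I-P_K)f\|_{H^t}\le \|\LL\|_{H^t\to H^t}\,\|(\JJ-\JJ_K)f\|_{H^t}
\le C\,\|\LL\|_{H^t\to H^t}\,(1+K^2)^{-\alpha/2}\,\|f\|_{H^s},
\]
again by Lemma~\ref{lem:J}.

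Adding the two bounds yields the claim, with $C$ depending on the constant of Lemma~\ref{lem:J} and on $\max(\|\LL\|_{H^s\to H^s},\|\LL\|_{H^t\to H^t})$, both finite by Lemma~\ref{lem:LK_norms}. The only delicate point is the inner-projection term: one must recognise that the decay cannot be extracted from the truncation while it is trapped inside, but can be moved past $\LL$, and that this manoeuvre is legitimate precisely because $\LL$ remains bounded on the lower-order Sobolev space $H^t(\Omega)$ — which is exactly where the admissibility of the exponent $t$ guaranteed by $s_2>t_1\ge t_2$ enters.
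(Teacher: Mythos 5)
Your proof is correct and takes essentially the same route as the paper: the paper telescopes $\JJ(\LL_K-\LL) = (\JJ P_K-\JJ)\LL P_K + \LL'(\JJ P_K - \JJ)$ (where $\LL'$ is the transfer operator on $H^t(\Omega)$) and bounds both pieces using Lemma~\ref{lem:J} together with the boundedness of $\LL$ on $H^s(\Omega)$ and of $\LL'$ on $H^t(\Omega)$ from Lemma~\ref{lem:LK_norms}, which is precisely your argument up to a mirror-image choice of where the extra projection is inserted ($\JJ\LL P_K$ in the paper versus $P_K\LL$ in yours). Your additional observations, $\JJ P_K = P_K \JJ$ and $\|P_K\|_{H^t\to H^t}\le 1$, play exactly the role of the paper's use of $\|P_K\|_{H^s\to H^s}\le 1$, so the ingredients and the resulting constant are the same.
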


\begin{proof}
Let $\LL'$ denote the transfer operator when considered
on the larger space $H^t(\Omega)$. Then using the property $\JJ\LL = \LL'\JJ$,
we have
\[\JJ(\LL_k - \LL) = \JJ P_K\LL P_K - \JJ \LL
= (\JJ P_K - \JJ) \LL P_K - \LL' (\JJ P_K - \JJ).  \]
Thus,
\begin{align*}
\|\JJ(\LL_K - \LL)\|_{H^s\to H^t} &\leq
\left(\|\LL\|_{H^s\to H^s} \|P_K\|_{H^s\to H^s}
+ \|\LL'\|_{H^t\to H^t}\right)
\|\JJ - \JJ_K\|_{H^s\to H^t} \\
&\leq C (1+K^2)^{-\alpha/2},
\end{align*}
where we have used Lemma \ref{lem:LK_norms}, Lemma \ref{lem:J}
and $\|P_K\|_{H^s\to H^s} \leq 1$.
\end{proof}

\begin{prop}\label{prop:resolv}
Let $\LL$ and the family $(\LL_K)_K$ be as above, considered as operators
on $H^s(\Omega)$ where $s\in\Ntwo$ with $s_1 \geq s_2$.
Then, for $t\in\None_0^2$ with $s_2>t_1\geq t_2$ and for all $K\in\None$ we have
\[ \| (I-\LL_K) ^ {-1}  - (I - \LL)^{-1}\|_{H^s\to H^t} \leq C(1+K^2)^{-\alpha/2},\]
for some $C>0$ and $\alpha = s_2-t_1$.
\end{prop}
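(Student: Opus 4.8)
The plan is to combine the second resolvent identity \eqref{eq:sec_res_id} with the intertwining relation between the embedding $\JJ$ and the transfer operator, and then to control the three resulting factors separately, so that the decay rate is supplied entirely by Proposition~\ref{prop:JLk}. Throughout I write $\LL'$ for the transfer operator considered on the larger space $H^t(\Omega)$.

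First I would record the uniform invertibility that makes the argument work. By Lemma~\ref{lem:LK_norms}\textit{(i)} the bound $\|\LL_K^n f\|_{H^s}\leq \tilde{C}_{n,s_2}^{1/2}\|w\|_\infty^{n}\|f\|_{H^s}$ holds with $\tilde{C}_{n,s_2}$ growing only polynomially in $n$ and independent of $K$. Since $\|w\|_\infty<1$, the Neumann series $\sum_{n\geq 0}\LL_K^n$ converges in operator norm on $H^s(\Omega)$ and represents $(I-\LL_K)^{-1}$, with
\[
\|(I-\LL_K)^{-1}\|_{H^s\to H^s}\leq \sum_{n=0}^{\infty}\tilde{C}_{n,s_2}^{1/2}\|w\|_\infty^{n}=:M<\infty,
\]
a bound independent of $K$. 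The same argument applied to $\LL'$ on $H^t(\Omega)$ (Lemma~\ref{lem:LK_norms}\textit{(ii)} with $t$ in place of $s$, which is legitimate as $t_1\geq t_2$) shows that $(I-\LL')^{-1}$ is a bounded operator on $H^t(\Omega)$.

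Next I would exploit the intertwining relation. Since $\JJ$ acts as the identity on functions, $\JJ\LL=\LL'\JJ$, and comparing $(I-\LL)f=g$ with $(I-\LL')\JJ f=\JJ g$ gives the resolvent intertwining $\JJ(I-\LL)^{-1}=(I-\LL')^{-1}\JJ$. Applying $\JJ$ on the left of \eqref{eq:sec_res_id} (with signs rearranged) and inserting this relation yields
\[
\JJ\bigl[(I-\LL_K)^{-1}-(I-\LL)^{-1}\bigr]
=(I-\LL')^{-1}\,\JJ(\LL-\LL_K)\,(I-\LL_K)^{-1}.
\]
I would then estimate the three factors in operator norm: the left factor $(I-\LL')^{-1}$ is bounded on $H^t(\Omega)$ by the first step; the right factor $(I-\LL_K)^{-1}$ is bounded on $H^s(\Omega)$ by $M$ uniformly in $K$; and the middle factor satisfies $\|\JJ(\LL-\LL_K)\|_{H^s\to H^t}=\|\JJ(\LL_K-\LL)\|_{H^s\to H^t}\leq C(1+K^2)^{-\alpha/2}$ with $\alpha=s_2-t_1$ by Proposition~\ref{prop:JLk}. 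Multiplying the three bounds gives the claim with a new constant $C>0$.

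The main obstacle is the uniform-in-$K$ bound on $\|(I-\LL_K)^{-1}\|_{H^s\to H^s}$: the mere spectral-radius bound $r(\LL_K)\leq\|w\|_\infty<1$ does not by itself control the resolvent norms uniformly, and one genuinely needs the polynomial-in-$n$, independent-of-$K$ growth of the constants $\tilde{C}_{n,s_2}$ from Lemma~\ref{lem:LK_norms}\textit{(i)} in order to sum the Neumann series into a $K$-independent bound. Once that uniform bound is in hand, the remainder is a routine application of the resolvent identity together with Proposition~\ref{prop:JLk}.
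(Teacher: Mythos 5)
Your proof is correct and follows essentially the same route as the paper: the second resolvent identity, the intertwining of $\JJ$ with a resolvent, the decay supplied by Proposition~\ref{prop:JLk}, and a Neumann-series summation using the $K$-independent, polynomially-in-$n$ growing constants from Lemma~\ref{lem:LK_norms} to get the required resolvent bounds. The only (immaterial) difference is the mirror image of the identity you use: you pass $\JJ$ through $(I-\LL)^{-1}$, so your uniform-in-$K$ bound is needed for $(I-\LL_K)^{-1}$ on $H^s(\Omega)$, whereas the paper passes $\JJ$ through $(I-\LL_K)^{-1}$ and needs the uniform bound for $(I-\LL'_K)^{-1}$ on $H^t(\Omega)$ --- both are supplied by the same lemma.
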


\begin{proof}
As $r(\LL) \leq \|w\|_{\infty} <1$ by Lemma \ref{lem:LK_norms},
the operator $(I-\LL)^{-1}$ exists and is bounded.
Let $(\LL'_K)_{K}$ denote the family of transfer operators when
considered on the larger space $H^t(\Omega)$. Similarly, as $\rho(\LL'_K)\leq \|w\|_{\infty} <1$ and the norms of $(\LL'_K)^n$ are bounded uniformly in $K$ by
Lemma \ref{lem:LK_norms}, the sums
$\sum_{n=0}^{\infty} \|\LL_K\|^n_{H^t\to H^t}$ are bounded by a constant
independent of $K$ and therefore
$\|(I-\LL_K)^{-1}\|_{H^t\to H^t}$ is uniformly bounded in $K$.

Using the property
$\JJ(I-\LL_K) = (I-\LL'_K)\JJ$ and the second resolvent identity
(see equation \eqref{eq:sec_res_id}) we have
% \footnote{Second resolvent identity:
% $R_\lambda(A)-R_\lambda(B) = R_\lambda(A)(B-A)R_\lambda(B)$} we have
% \begin{eqnarray*}
\begin{align*}
\|\JJ ((I-\LL_K)^{-1}  &- (I - \LL)^{-1})\|_{H^s\to H^t} \\
&=\| (I-\LL'_K) ^ {-1} \JJ(\LL - \LL_K)(I - \LL)^{-1}\|_{H^t} \\
& \leq \|(I-\LL'_K) ^ {-1}\|_{H^t\to H^t}
\|\JJ(\LL - \LL_K)\|_{H^s\to H^t}
\|(I - \LL)^{-1}\|_{H^s\to H^s}\,.
\end{align*}
% \end{eqnarray*}
Using Proposition \ref{prop:JLk} for the bound on
$\|\JJ(\LL - \LL_K)\|_{H^s\to H^t}$ finishes the proof.
\end{proof}
We are finally able to state and prove our main convergence result.
\begin{theorem}\label{thm:conv}
Let $\LL$ and the family $(\LL_K)_K$ be as above, considered as operators
on $H^s(\Omega)$ with $s_1,s_2\in\None$ and $s_1 \geq s_2 > t_1 \geq t_2 \geq 0$.
% Let all the assumptions of Proposition \ref{prop:resolv} be valid.
Then for $f_0\in H^s(\Omega)$ the operator equations \eref{eq:op_eq}
and \eref{eq:proj_op_eq} have unique solutions $f\in H^s(\Omega)$
and $f_K\in H^s(\Omega)$, respectively. Moreover there exist a constant
$C>0$  such that for all $K\in\None$ we have

\[\|f - f_K\|_{H^t} \leq C (1+K^2)^{-\alpha/2} \|f_0\|_{H^s},\]
where $\alpha = s_2 - t_1$.
\end{theorem}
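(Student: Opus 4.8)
The plan is to assemble the pieces already established in the preceding lemmas and propositions; at this stage the theorem is essentially a corollary of Proposition~\ref{prop:resolv} once existence and uniqueness are secured. First I would dispatch existence and uniqueness. By Lemma~\ref{lem:LK_norms}, both $\LL$ and every $\LL_K$ have spectral radius bounded by $\|w\|_\infty<1$ when acting on $H^s(\Omega)$. Consequently $I-\LL$ and each $I-\LL_K$ are invertible on $H^s(\Omega)$, with inverses given by the convergent Neumann series, so the unique solutions are
\[f=(I-\LL)^{-1}f_0,\qquad f_K=(I-\LL_K)^{-1}f_0,\]
both lying in $H^s(\Omega)$.

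Next I would write the error as the resolvent difference applied to the source,
\[f-f_K=\left((I-\LL)^{-1}-(I-\LL_K)^{-1}\right)f_0.\]
Since $f_0\in H^s(\Omega)$ while the target norm is the weaker $H^t$ norm, the right-hand side is controlled by the operator norm of the resolvent difference regarded as a map $H^s(\Omega)\to H^t(\Omega)$. Proposition~\ref{prop:resolv} supplies exactly this bound, whence
\[\|f-f_K\|_{H^t}\leq\left\|(I-\LL)^{-1}-(I-\LL_K)^{-1}\right\|_{H^s\to H^t}\|f_0\|_{H^s}\leq C(1+K^2)^{-\alpha/2}\|f_0\|_{H^s},\]
with $\alpha=s_2-t_1$, which is the claimed estimate.

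The conceptual obstacle here is genuine but has already been resolved upstream: one cannot hope to measure the error in the full $H^s$ norm, because $\LL$ has a bounded inverse and is therefore non-compact, so $\LL_K\to\LL$ in the $H^s$ operator norm would force $\LL$ to be a compact uniform limit of finite-rank operators, a contradiction. The resolution, implemented in the earlier development, is to accept convergence in the coarser $H^t$ norm and to exploit the compact embedding $\JJ\colon H^s(\Omega)\hookrightarrow H^t(\Omega)$; the rate $(1+K^2)^{-\alpha/2}$ is then inherited from the approximation rate of $\JJ$ by $\JJ_K$ in Lemma~\ref{lem:J}. With Proposition~\ref{prop:resolv} in hand, the only new ingredient at this final stage is recognising that applying the resolvent difference to the fixed source $f_0\in H^s(\Omega)$ converts the operator-norm bound into the stated pointwise error estimate; the smoothness of $f_0$ enters solely through the factor $\|f_0\|_{H^s}$, and fixes the exponent via $s_2$ in $\alpha=s_2-t_1$.
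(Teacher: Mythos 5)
Your proposal is correct and follows essentially the same route as the paper: the paper's own proof is a one-line application of Proposition~\ref{prop:resolv} after writing $f=(I-\LL)^{-1}f_0$ and $f_K=(I-\LL_K)^{-1}f_0$, with invertibility already secured by Lemma~\ref{lem:LK_norms}. Your additional remarks on the non-compactness obstruction and the role of the embedding $\JJ$ faithfully reproduce the motivation the paper gives at the start of Section~\ref{sec:conv}, so nothing is missing.
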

\begin{proof}
The statement follows by writing $f=(I-\LL)^{-1}f_0$,
$f_K=(I-\LL_K)^{-1}f_0$ and using Proposition \ref{prop:resolv}.
\end{proof}

\begin{rem}
\corr{Note that for $f_0 \in \Lambda_K = P_K(H^s(\Omega))$, the unique
solution $f_K$ to \eqref{eq:proj_op_eq} also lies in the finite-dimensional space $\Lambda_K$, so that \eqref{eq:proj_op_eq} can be solved as a truly finite-dimensional problem.}
\end{rem}

%% Mention interpolation theory?

\section{Discussion and numerical experiments}\label{sec:conc}
Let us first summarise and rephrase our results in intuitive terms.
%for the benefit of the less rigorously inclined reader.
Since the linear
operator in equation \eqref{eq:op_eq} fails to be compact, any finite-dimensional matrix representation
would not reflect properties of the operator at all. Nevertheless the
finite-dimensional representation in \eqref{eq:proj_op_eq} provides a meaningful
approximation for the solution of the inhomogeneous equation.
For smooth periodic functions in
location and angle of reflection, the solution of the approximated
problem \eref{eq:proj_op_eq}
converges to the solution of \eref{eq:op_eq} in the Sobolev norm.
The approximation error depends on the degree of smoothness of
the inhomogeneous part. In addition, the approximation error is measured in
a weaker norm, for instance the frequently used $L^2$ norm for the choice
$t=(0,0)$. The properties of this weaker norm also determine the speed
of convergence. Broadly speaking, the convergence rate obeys a power law
with the exponent being determined by the smoothness of the
energy source and the norm used to measure the approximation error.

A finite amount of dissipation is a crucial ingredient in the entire approach,
that is, the weight $w$ has to satisfy $\|w \|_\infty <1$.
The simplest choice of a constant weight, $w(x,y)=\mu<1$, corresponds to
a dissipation which occurs at each collision at the boundary, for example,
an attenuation of the sound wave caused by an inelastic reflection at
the boundary of the cavity. Proper modelling of the damping parameters involved
is a crucial aspect of the method and is necessary
to describe realistic problems accurately~\cite{HarMorTanCha2019_tractor_yanmar}.
For example, a linear attenuation in the medium would
result in a path-length dependent weight $w(x,y)=
\exp(-2 \mu \cos(y))$.
This choice, however, does not obey the stipulated bound as orbits with
angles close to $y=\pm\pi/2$
have arbitrarily small path length, and hence small dissipation between
subsequent collisions. We could overcome this particular
problem by restricting
the angle of reflection to non-tangential collisions, that is,
$y \in (-(1-\epsilon)\pi/2, (1-\epsilon)\pi/2)$ for
a small $\epsilon > 0$, effectively constraining the permitted type of
energy source.
This however requires changing the Hilbert space and the projection
operators, as the validity of $c_k(D_y^mf) = (\mathrm{i}2k)^m c_k(f)$ and
$D_y P_k = P_k D_y$ is no longer given for a smooth function $f$ on an interval
instead of on a circle. One suitable choice could be
the space of functions in $H^s(\Omega_{\epsilon})$ with
vanishing weak derivatives $D^{\nu}f$ on the boundary. A suitable
basis is then the basis of Daubechies wavelets~\cite{Walker08}.

To illustrate the impact of Theorem \ref{thm:conv}, we perform
numerical simulations of circular billiards with constant damping
$w(x,y)=\mu$. As a proxy for the error estimate we use the distance
between approximations of subsequent order $\|f_{K+1} - f_K\|_{H^t}$,
which obeys essentially the same upper bound
\begin{equation}
  \|f_{K+1} - f_K\|_{H^t}
  \leq \|f - f_{K+1}\|_{H^t} + \|f - f_K\|_{H^t}
  \leq 2 C
    (1 + K^2)^{-\alpha/2} \|f_0\|_{H^s} \, .
    \label{eqn:prediction-numerics}
\end{equation}
Strictly speaking we have established this bound for integer vales of
$t_\ell$ and $s_\ell$ only. With a little more effort this could be
remedied by appealing to interpolation theory \cite{Triebel3}.
For simplicity of exposition we shall not pursue this here.
For our numerical considerations we take the liberty to
apply the bound above for non-integer values.
For the norm \(\|\cdot\|_{H^t}\), which estimates the truncation error,
we use the choices $t=(0,0)$, that is, the $L^2$ norm, and
$t=(1,1)$, a norm which is just outside the set of exponents guaranteeing
pointwise convergence.

The transfer operator's action on Fourier modes is given in
equation \eref{eq:TMatCirc}. In order to use it for a numerical test, we
have to use a representation for all Fourier modes, see
equation \eref{eq:T_matrix_circle}.
We show results for three different choices of the initial boundary
density $f_0$. They have in common that their support is given by
\[
  \mathrm{supp}(f_0) = \left\{
    (x, y): x\in \left[\pi/6,
      \pi/6 + 4\pi/3\right],
    y\in[-0.8, 1.2]\right\}.
\]
In order to define the boundary densities, we will use variables
scaled on this rectangle according to
$\tilde{x} = \left(x - \pi/6\right) / (4\pi/3)$ and
$\tilde{y} = (y + 0.8) / 2$ which take values between zero and one on
$\mathrm{supp}(f_0)$.
\begin{itemize}
\item Case $G$: a discontinuous function, that is, $f_0(x, y) = 1$ for
  \((x, y) \in \mathrm{supp}(f_0)\).
This function is contained in
$H^{(1/2-\epsilon,1/2-\epsilon)}(\Omega)$ for any small $\epsilon>0$.
For simplicity of exposition we will use, however, the value $s_2=1/2$
in the discussion of the numerical results below.

\item Case $W_1$: a continuous function given
  by $f_0(x, y) = \sqrt{\tilde{x}(1 - \tilde{x})}
\sqrt{\tilde{y}(1 - \tilde{y})}$ for
\((x, y) \in \mathrm{supp}(f_0)\).
 This function lies in
$H^{(1-\epsilon,1-\epsilon)}(\Omega)$ for any small $\epsilon>0$.
As before, we use the choice $s_2=1$ in the discussion below.

\item Case $W_2$: a smooth function given by
 $f_0(x, y) = \left(\sqrt{\tilde{x}(1 - \tilde{x})}
\sqrt{\tilde{y}(1 - \tilde{y})}\right)^3$ for
\((x, y) \in \mathrm{supp}(f_0)\).
This function
  lies in $H^{(2-\epsilon, 2-\epsilon)}(\Omega)$ for any small
  $\epsilon>0$ and we use the choice $s_2=2$ in our discussion.
\end{itemize}

\begin{figure}[h!]
  \centering
  \includegraphics[width=0.49\textwidth]{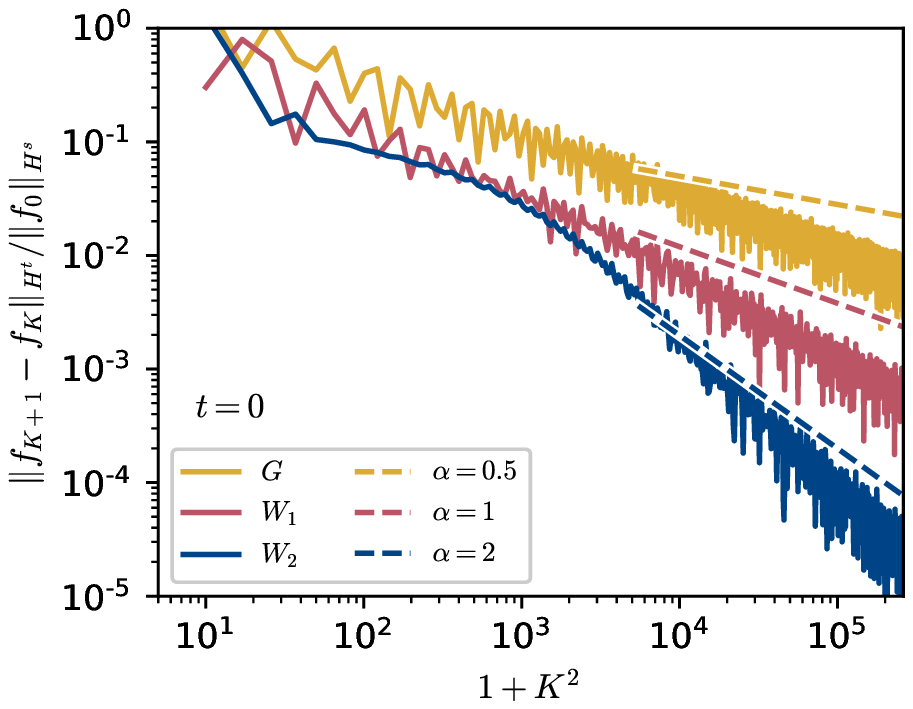}
  \includegraphics[width=0.49\textwidth]{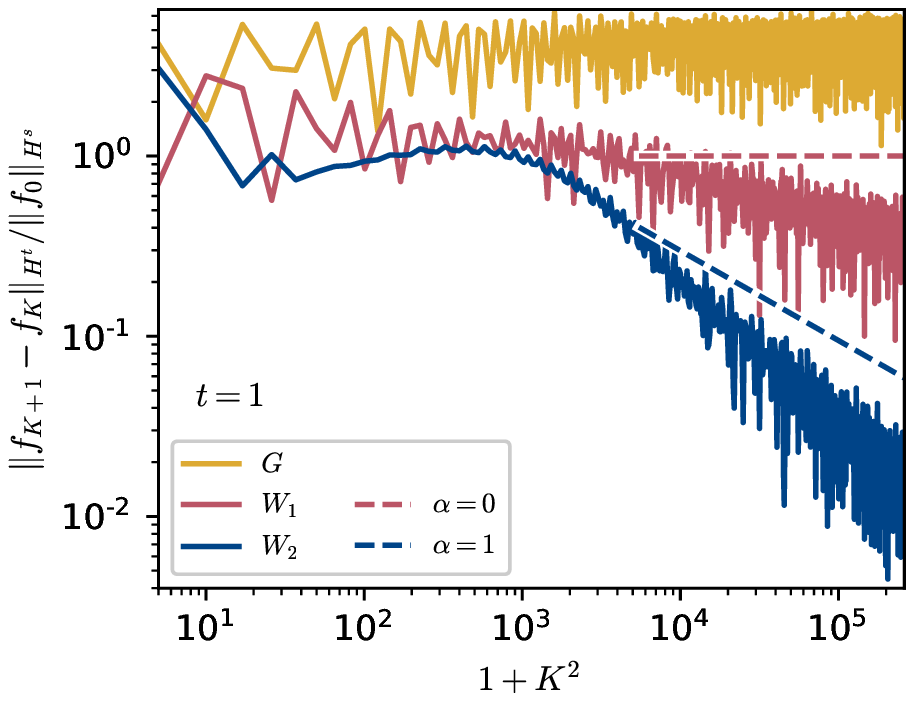}
  \caption{Error estimate $\|f_{K+1} - f_K\|_{H^t}$  for a circular billiard
    with constant damping $w(x,y)=\mu=0.9$ as a function of the truncation
    order $K$ on a double logarithmic scale. Left: $t_1=t_2 = 0$
    (convergence in $L^2$ norm), right: $t_1=t_2= 1$
    (point-wise convergence, in essence).
    Results are displayed for three
    different initial boundary densities $G$: $s_2=1/2$ (yellow, top),
    $W_1$: $s_2=1$ (red, middle),
    $W_2$: $s_2=2$ (dark blue, bottom), see text.
    Lines show the power law decay according to
    equation \eqref{eqn:prediction-numerics},
    $\alpha = s_2 - t_1$.
    \label{fig1}}
\end{figure}

The data shown in \Fref{fig1} confirm the upper bound in Theorem
\ref{thm:conv}.
For the $L^2$ norm, $t_1=t_2=0$, we observe,
in each case, convergence at a rate which is slightly faster than the
theoretical prediction $\alpha=s_2-t_1$. The power law decay of the truncation
error shows up for large values of $K$ and the onset of this scaling region
shifts towards larger values if the initial boundary density becomes smooth.
This should not come as a surprise, since the resolution of higher order
derivatives requires higher order Fourier modes.
For the parameter at the boundary of point-wise convergence $t=(1,1)$,  we see that the
discontinuous boundary density fails to
converge in line with our theoretical predictions.
While Theorem \ref{thm:conv} does not guarantee convergence in case
$W_1$ either, the numerical data suggest an extremely slow convergence
which is still consistent with the upper bound estimate $\alpha=s_2-t_1=1-1=0$. Finally,
for the smooth boundary density (case $W_2$) we observe a convergence rate
slightly faster than the theoretical prediction.

From a dynamical perspective, circular billiards are trivial since the
billiard map \eref{eq:T} is an integrable twist map.
In order to get an idea of how dynamical properties impact on convergence properties
we show numerical results for a deformed circle billiard which displays mixed regular and chaotic dynamics.
For the deformation
we choose the radius to depend on the polar angle $x$ according to
\begin{equation}
  \label{eqn:deformation}
  r(x) = 1 + \delta \cos(m x),
\end{equation}
where we choose $m = 3$ in the following.
Deformations of this kind are known in the
literature as  Lima\c{c}on billiards
\cite{BaeDul1997_Chaos_Limacon}.
We will cover the cases
$\delta = 0.01$ and $\delta = 0.1$. For larger values, the billiard
fails to be convex. In order to demonstrate the change in dynamical behaviour,
\Fref{fig:poincare_section} shows the Poincare plot of
the collision map $T$.
For a small value of the deformation, $\delta = 0.01$, one still observes
a fairly large number of invariant tori in accordance with general
KAM folklore. The larger perturbation shown in
\Fref{fig:poincare_section}, $\delta = 0.1$,
destroys most of the regular
motion and renders the system chaotic with a few exceptions, for example, the
highlighted period-$3$ island.

\begin{figure}[h!]
  \centering
  \includegraphics[width=0.93\textwidth]{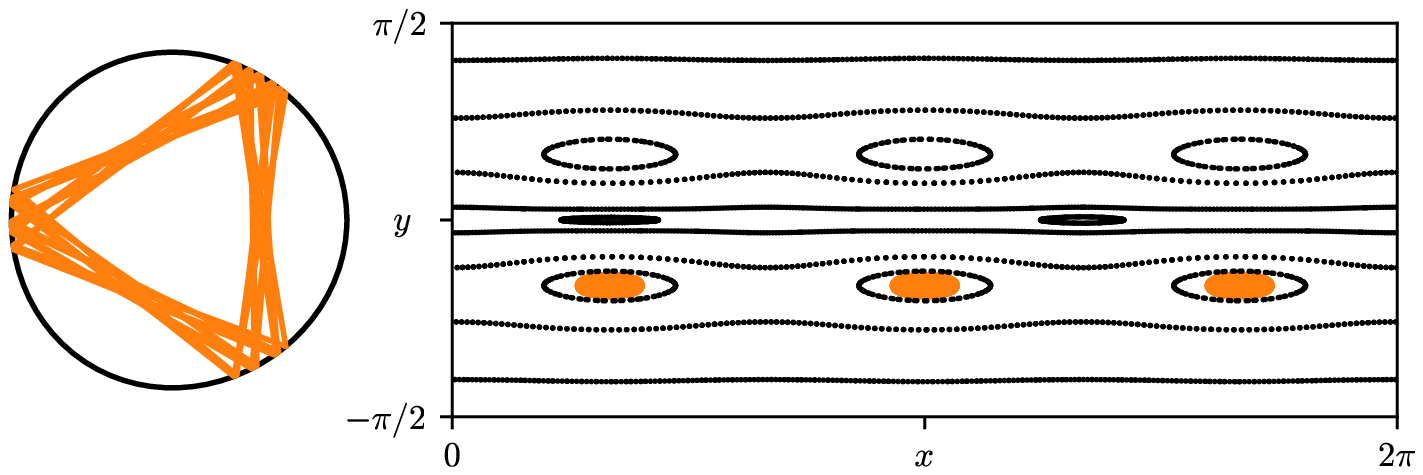}\\
  \includegraphics[width=0.93\textwidth]{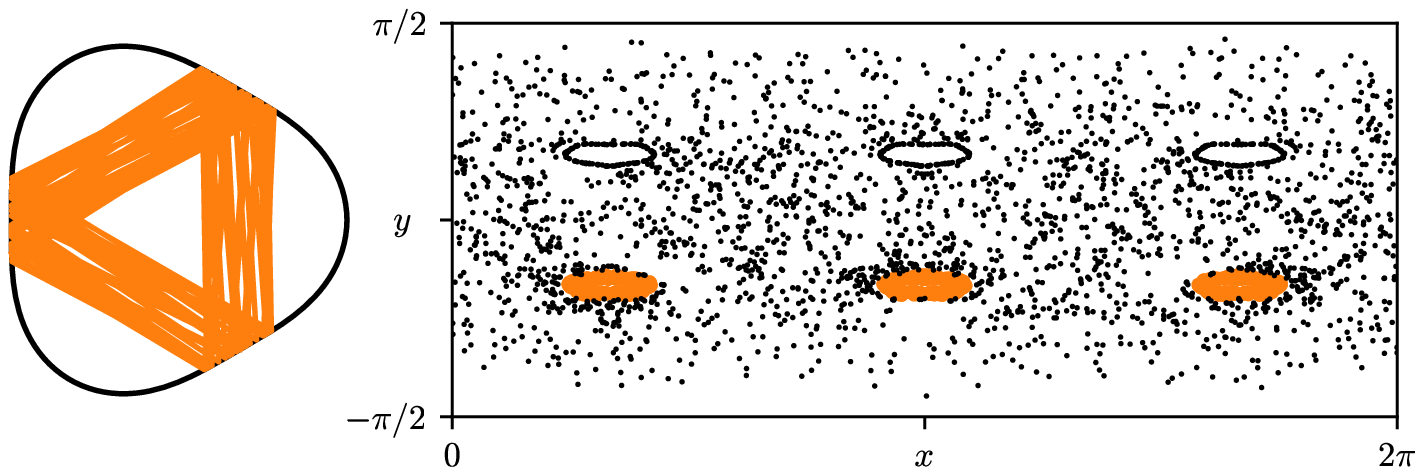}
  \caption{Billiard with orbit in configuration space
   (left) and Poincare plot of the boundary
    map $T$
    in the $(x, y)$ phase space (right)
    for a deformed billiard according to \eref{eqn:deformation}.
    Top: weak deformation of the circle ($m = 3, \delta=0.01$),
    bottom: strong but still convex deformation
    ($m = 3, \delta=0.1$).
    The orbit depicted in real space is highlighted in phase space
    as well.
    \label{fig:poincare_section}
  }
\end{figure}

In order to calculate the convergence of the energy
distribution we have to evaluate the matrix elements of the transfer
operator. For the circular billiard, the only non-zero entries take the value $\pm\mu$ and follow the structure given by equation \eqref{eq:T_matrix_circle}. Once the circle has been deformed, the analytic calculation of the matrix elements is no longer possible.
Even worse, the collision map is not given in
closed analytic form either, so that an efficient numerical calculation becomes
a nontrivial task (see the appendix for details).
However, we are able to reduce the calculation of the matrix elements to double
integrals with the kernel being given in closed analytic form,
see equation \eqref{eq:transfer_op_deformed}. Nevertheless,
the numerical evaluation is still time consuming, in particular,
since the matrix is no longer sparse.
Hence, we can only calculate finite approximations up to $K=30$.
In order to reach the scaling regime
(see \Fref{fig1} for comparison) we employ
a stronger damping of $\mu = 0.1$. The results for the error
measured in $L^2$ norm, that is, for the choice $t_1=t_2=0$, are shown in \Fref{fig2}.

\begin{figure}[h!]
  \centering
  \includegraphics[width=0.49\textwidth]{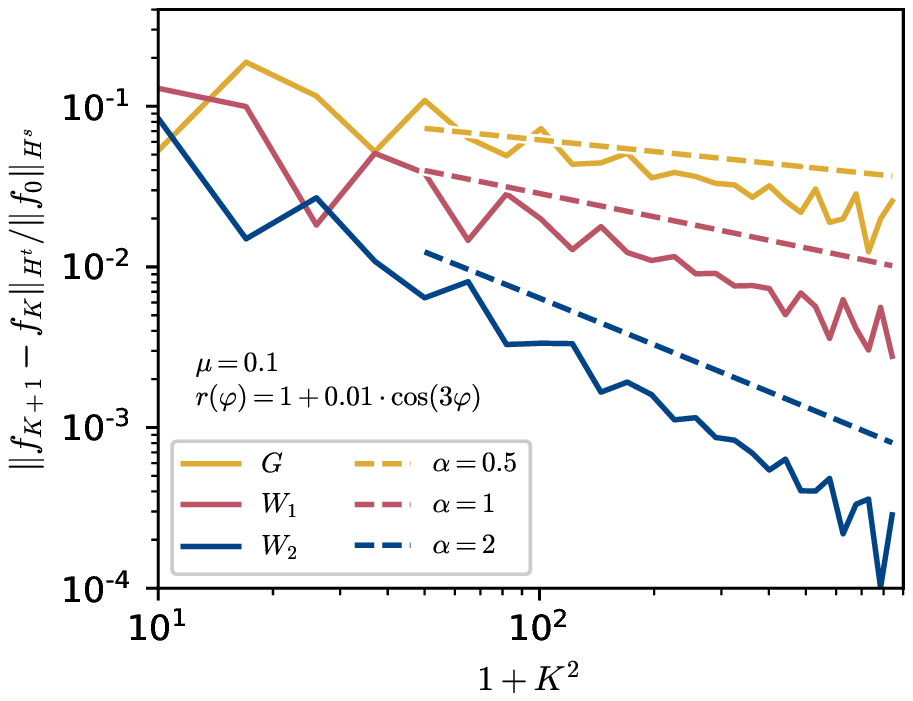}
  \includegraphics[width=0.49\textwidth]{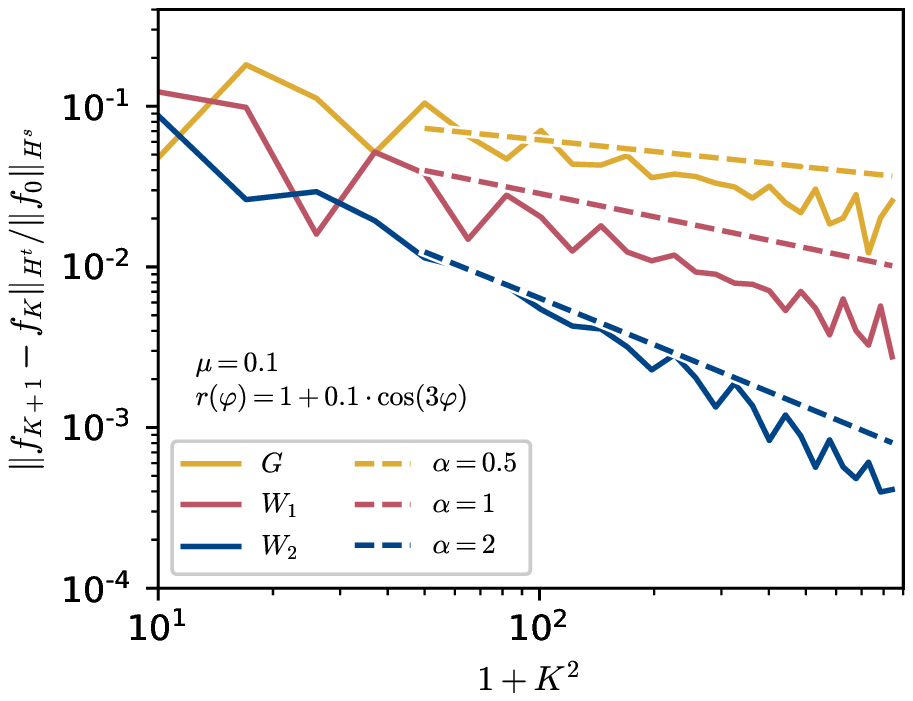}
  \caption{Error estimate $\|f_{K+1} - f_K\|_{H^t}$  in $L^2$ norm, $t=(0,0)$,
           for the energy density of a Lima\c{c}on billiard
           as a function of the truncation
           order $K$ on a double logarithmic scale.
           Constant damping $w(x,y)= \mu = 0.1$ and two deformations,
           $\delta = 0.01$ (left) and $\delta = 0.1$ (right), are considered.
           Results are displayed for the three
           different initial boundary densities $G$: $s_2=1/2$ (yellow, top),
           $W_1$: $s_2=1$ (red, middle), $W_2$: $s_2=2$ (dark blue, bottom),
           see \Fref{fig1}.
           Lines indicate a power law decay, $\alpha=s_2-t_1$,
           according to the rigorous estimate for circle billiards.
    \label{fig2}}
\end{figure}

It is quite remarkable that the decay of the error is apparently
almost unaffected by the degree of chaoticity. Hence the rigorous
error estimate of Theorem \ref{thm:conv} which covers the case
$\delta=0$ seems to have a wider range of applicability. While intuitively
such an observation would not be surprising for nearly
integrable cases it is quite counter-intuitive that the same
error estimate may hold as well in strongly chaotic situations.
However, our proof does not cover any of the deformed billiards
and there does not seem to be an obvious way how the methodology
can be generalised to these complicated cases. Nevertheless,
it is reaffirming that our study of a simple dynamical
system like the circular billiard has relevance for more complex dynamical behaviour.

%%\section*{Acknowledgement}
\ack
The authors gratefully acknowledge the support of the research through
EPSRC grant EP/R012008/1.

\appendix

\section{Matrix elements}

Consider a convex billiard with boundary being given by $r(x)$ in polar
coordinates where $x$ denotes the polar angle (see, for example,
equation \eqref{eqn:deformation}). Denote by
$(x', y') = (T_x(x,y), T_y(x,y))$
%$x'=T_x(x,y)$, $y'=T_y(x,y)$
the
collision map where $x$ and $x'$ label subsequent collisions with
the boundary. Using a standard representation in terms of Fourier basis
functions~\cite{Tann_JSV09}, the matrix elements $M_{l,k}$ of the transfer
operator read
\begin{eqnarray*}
  M_{l, k}
  &=& \frac{1}{2\pi^2}\int\limits_{0}^{2\pi}
      \int\limits_{-\pi/2}^{\pi/2}
      (\CC_{\phi}e_k)(x, y)\,
      \overline{e_l(x,y)}
      \, \mathrm{d}y \mathrm{d}x \nonumber
  \\
  &=&
      \frac{1}{2\pi^2}\int\limits_{0}^{2\pi}
      \!\!\int\limits_{-\pi/2}^{\pi/2}
      \!\!\mathrm{e}^{ \mathrm{i}k_1\phi_x(x, y)-\mathrm{i}l_1x +
      2\mathrm{i}k_2\phi_y(x, y) - 2\mathrm{i}l_2 y}
       \, \mathrm{d}y \mathrm{d}x
\end{eqnarray*}
with $k=(k_1,k_2)$ and $l=(l_1,l_2)$.

In case of the perfect circle we get a representation which is given
by a sparse matrix with only a few non-zero elements, close to the main
diagonal, namely
\begin{equation}
  (\CC_{\phi}e_l)(x, y) = \sum\limits_{k \in \Ztwo} M_{l, k}\cdot e_k(x, y)
\end{equation}
with the matrix elements
\begin{equation}
  M_{l, k} = (-1)^{k_1}\,\delta_{k_1, l_1}\,\delta_{l_2, k_1 + k_2}
  \,\quad\quad\,\, k, l \in \Ztwo.
  \label{eq:T_matrix_circle}
\end{equation}
This is the extension of equation \eref{eq:TMatCirc} to all Fourier modes
and it was used to calculate the values for \Fref{fig1}.

In order to eliminate the implicitly
defined collision map we change integration variables from $(x,y)$ to
$(x,x')$. Using $y_1(x,x')=y$ and $y_2(x,x')=y'$ for the two scattering angles
the matrix elements become
\begin{equation}\label{eq:transfer_op_deformed}
M_{l,k}=
      \frac{1}{2\pi^2} \int\limits_{0}^{2\pi}
      \!\!\int\limits_{0}^{2\pi}
      \left|\frac{\partial y_2(x,x')}{\partial x}\right|
      \mathrm{e}^{\mathrm{i}(k_1 x - l_1 x')}
      \mathrm{e}^{2\mathrm{i}(k_2 y_1(x, x') - l_2 y_2(x, x'))}
      \mathrm{d}x' \mathrm{d}x,
\end{equation}
where the additional factor is the Jacobian of the coordinate transformation.
In contrast to the collision map $T$, the
expressions $y_1(x,x')$ and $y_2(x,x')$ can be obtained in closed
analytic form so that equation \eqref{eq:transfer_op_deformed} is easier to
implement numerically.

\begin{figure}[h!]
  \centering
  \includegraphics[width=0.49\textwidth]{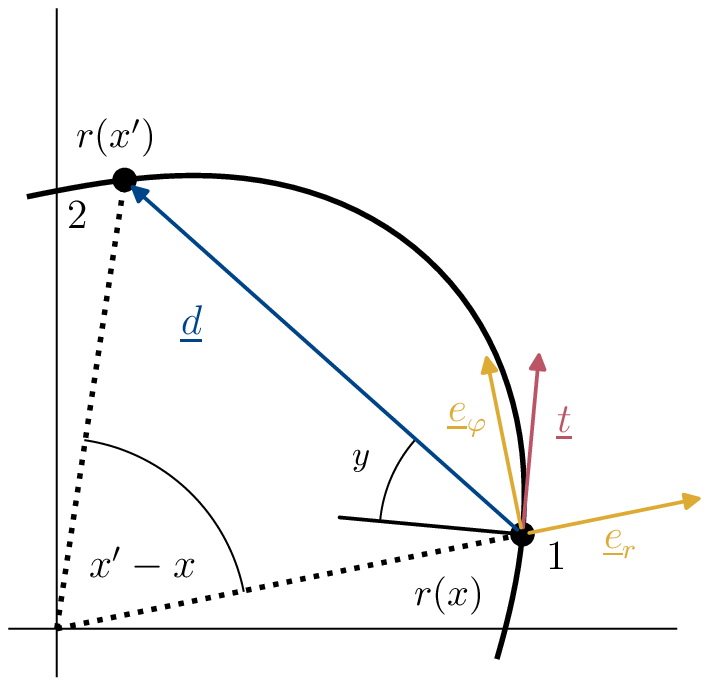}
  \caption{Geometric configuration of two subsequent collisions
in a convex billiard with a particle moving
from point 1 (with parameter value $x$) to point 2 (with parameter
value $x'$). We also depict the ray vector $\underline{d}$, the tangent vector $\underline{t}$,
and the unit vectors $\underline{e}_r$ and $\underline{e}_\varphi$ in
polar coordinates. \label{fig4}}
\end{figure}

\Fref{fig4} shows a sketch of two subsequent collisions.
The first scattering angle $y_1$ is given in terms of an
inner product
\begin{displaymath}
\sin(y_1)= \underline{d} \cdot \underline{t}/(|\underline{d}| |\underline{t}|)
\, .
\end{displaymath}
Since the position vector of the initial point is given by
$r(x) \underline{e}_r$ the tangent is easily obtained as
$\underline{t}=r'(x)\underline{e}_r+r(x) \underline{e}_\varphi$. The vector
separating the two points of collision is given in terms of the
local basis vectors by
\begin{displaymath}
\underline{d}= (r(x') \cos(x'-x)- r(x)) \underline{e}_r+ r(x') \sin(x'-x)
\underline{e}_\varphi \, .
\end{displaymath}
Hence the closed form expression for the first scattering angle reads
\begin{equation}\label{y1}
\sin(y_1)=\frac{r'(x)(r(x')\cos(x'-x)-r(x))+r(x) r(x') \sin(x'-x)}
{\sqrt{r^2(x)+(r'(x))^2} \sqrt{r^2(x)+r^2(x')-2 r(x) r(x') \cos(x'-x)}} \, .
\end{equation}
The second scattering angle is obtained by interchanging the two points
in \Fref{fig4}, i.e., by swapping $x$ and $x'$ in
equation \eqref{y1},
and including an additional minus sign for the outgoing angle
%%\begin{equation}%%\label{y2}
\[
\sin(y_2)=- \frac{r'(x')(r(x)\cos(x-x')-r(x'))+r(x') r(x) \sin(x-x')}
{\sqrt{r^2(x')+(r'(x'))^2} \sqrt{r^2(x')+r^2(x)-2 r(x') r(x) \cos(x-x')}} \, .
\]
%%\end{equation}

% See here for journal abbrevs:
% [[https://www.library.caltech.edu/journal-title-abbreviations]]
\section*{References}

\end{document}